\newtheorem{theorem}{Theorem}[section]
\newtheorem{proposition}[theorem]{Proposition}
\newtheorem{lemma}[theorem]{Lemma}
\theoremstyle{remark}
\newtheorem{example}{Example}[section]
\newcommand{\tuple}[1]{( #1 )}
\def\abs#1{\ensuremath{\lvert #1\rvert}} 
\let\epsilon\varepsilon
\newenvironment{longversion}{}{}
\newenvironment{shortversion}{}{}
\newcommand{\calA}{\mathcal{A}}
\newcommand{\calF}{\mathcal{F}}
\newcommand{\calR}{\mathcal{R}}
\newcommand{\calM}{\mathcal{M}}
\newcommand{\calU}{\mathcal{U}}
\newcommand{\prl}{\mathbin{\Vert}}
\renewcommand{\a}{a}
\renewcommand{\b}{b}
\newcommand{\bi}[2]{{}^{#1}_{#2}}
\newcommand{\tbi}[2]{\bi{#1}{#2}}   
\newcommand{\fbi}[2]{\mbox{{\large $\textstyle \bi{#1}{#2}$}}}   
\newcommand{\cfbi}[2]{\fbi{#1}{#2}}  
\newcommand{\reject}{q_{\mathrm{rej}}}
\newcommand{\Obs}{\Sigma}
\newcommand{\sym}{\mathsf{tr}}
\newcommand{\Reflexive}{\mathsf{Ref}}
\newcommand{\Ambiguous}{\mathsf{Amb}}
\renewcommand{\index}{\mathsf{idx}}
\renewcommand{\matrix}{\mathsf{mat}}
\newcommand{\lex}{\mathsf{lex}}
\newcommand{\transform}{\mathsf{next}}
\newcommand{\successor}{\mathsf{succ}}
\def\presuper#1#2%
\newcommand*{\dfa}{\textsc{dfa}\xspace}
\newcommand*{\dfal}{{Mealy automaton}\xspace}
\newcommand*{\init}{\varepsilon}
\title[Observation and Distinction]{Observation and Distinction \\
  Representing Information in Infinite Games}
\author[D. Berwanger, L. Doyen]{Dietmar Berwanger \and Laurent Doyen}
\address{{\small LSV, CNRS \& ENS} Paris-Saclay, France}
\email{dwb@lsv.fr, doyen@lsv.fr}
\date{\today ; extended version of a contribution to the Proceeding of STACS~2020}
\begin{document}
\excludecomment{shortversion}

\maketitle

\begin{abstract}
We compare two approaches for modelling imperfect information in infinite games
by using finite-state automata. The first, more standard approach views information
as the result of an observation process driven by a sequential Mealy machine.
In contrast, the second approach features indistinguishability relations 
described by synchronous two-tape automata.

The indistinguishability-relation model turns out to be
strictly more expressive than the one based on observations. 
We present a characterisation of the indistinguishability relations that 
admit a representation as a finite-state observation function.
We show that the characterisation is decidable, and give
a procedure to construct a corresponding Mealy machine whenever one exists.

\end{abstract}

\section{Introduction}

Uncertainty is a main concern in strategic interaction.
Decisions of agents are based on their knowledge about the system state,
and that is often limited. 
The challenge grows
in dynamical systems,
where the state changes over time, 
and it becomes severe, when the dynamics 
unravels over infinitely many stages. 
In this context, one fundamental question is how to model knowledge 
and the way it changes as information is acquired along the stages of the 
system run.

Finite-state automata offer a solid framework 
for the analysis of systems with infinite runs. 
They allow to reason about infinite state spaces in terms of finite
ones\,---\,of course, with a certain loss. 
The connection has proved to be extraordinarily successful in the study
of infinite games on finite graphs, in the particular setting of \emph{perfect 
information} assuming that players are informed about
every move in the play history, 
which determines the actual state of the system.    
One key insight is that winning strategies, in this setting, 
can be synthesized effectively \cite{BuechiLan69,Rabin72}:
for every game described by finite automata, 
one can describe the set of winning strategies 
by an automaton (over infinite trees) and, moreover, 
construct an automaton (a finite-state Moore machine) 
that implements a winning strategy.

In this paper, we discuss two approaches for modelling 
\emph{imperfect information}, where,
in contrast to the perfect-information setting, it is no longer assumed
that the decision maker is informed about the moves 
that occurred previously in the play history.

The first, more standard approach corresponds to viewing 
information as a result of an
observation \emph{process} that may be imperfect in the sense that
different moves can yield the same observation in~a stage of the game. 
Here, we propose a second approach, 
which corresponds to representing information 
as a \emph{state} of 
knowledge, by describing which histories 
are indistinguishable to the decision maker. 

Concretely, we assume a setting of 
synchronous games with perfect recall in~a partitional information model.
Plays proceed in infinitely many stages, each of which  
results in one move from a finite range. Histories and plays are thus 
determined as finite or infinite sequences of moves, respectively.
 
To represent information partitions, we consider two models based on
finite-state automata.
In the observation-based model, which corresponds to the standard approach in 
computing science and non-cooperative game theory, 
the automaton is a sequential Mealy machine that inputs moves  
and outputs observations from a finite alphabet. 
The machine thus describes an observation function, which
maps any history of moves 
to a sequence of observations that represents its information set.
In the indistinguishability-based model, we use two-tape automata 
to describe which pairs of histories belong to the same information set.

As an immediate insight, we point out that, in the finite-state setting, 
the standard model based on observation functions 
is less expressive than the one based on indistinguishability 
relations. Intuitively, this is because observation functions can only yield a 
bounded amount of information in each round\,---\,limited by the size of the 
observation alphabet, 
whereas indistinguishability relations can describe situations  where the 
amount of information received per round
grows unboundedly as the play proceeds.

We investigate the question whether an information partition represented
as (an indistinguishability relation given by) a two-tape automaton 
admits a representation 
as (an observation function given by) a Mealy machine.  
We show that this question is decidable, using results from the theory of 
word-automatic structures. We also present a procedure for constructing a
Mealy machine that represents a
given indistinguishability relation as an observation function,
whenever this is possible.

\section{Basic Notions}

\subsection{Finite automata}

To represent components of infinite games as finite objects,
finite-state automata offer a versatile framework (see~\cite{LNCS2500}, for a survey).
Here, we use automata of two different types, which we introduce
following the notation of \cite[Chapter 2]{Mikolajczak91}. 
 
As a common underlying model, a \emph{semi-automaton} is a 
tuple $\calA = (Q, \Gamma, q_\init, \delta)$ 
consisting of a finite set~$Q$ of \emph{states}, 
a finite \emph{input alphabet}~$\Gamma$, 
a designated \emph{initial state} $q_\init \in Q$, 
and a \emph{transition function} $\delta\colon Q \times \Gamma \to Q$.  
We define the size~$\abs{\calA}$ of~$\calA$ to be the number of its transitions, that is $\abs{Q} \cdot \abs{\Gamma}$.
To describe the internal behaviour of the semi-automaton we extend the transition 
function from letters to input words: 
the extended transition function
$\delta\colon Q \times \Gamma^* \to Q$ is defined by setting,  
for every state~$q \in Q$, 
\begin{itemize}
\item $\delta( q, \init ) := q$ for the empty word~$\epsilon$, and
\item  
$\delta( q, \tau c) := \delta( \delta( q, \tau), c )$, for any word 
obtained by the concatenation of a word $\tau \in \Gamma^*$ 
and a letter $c \in \Gamma$. 
\end{itemize}
On the one hand, we use automata as acceptors of finite words.
A \emph{deterministic finite automaton} (for short, $\dfa$) is a
tuple $\calA = (Q, \Gamma, q_\init, \delta, F)$ 
expanding a semi-automaton by a 
designated subset~$F \subseteq Q$ of \emph{accepting states}. 
We say that a finite input word $\tau \in \Gamma^*$ is \emph{accepted} 
by~$\calA$ from a state $q$ if $\delta(q, \tau ) \in F$. 
The set of words in $\Gamma^*$ that are accepted by~$\calA$ from the
initial state $q_\init$ forms its \emph{language}, denoted $L(\calA) \subseteq \Gamma^*$.

Thus, a \dfa recognises a set of words. 
By considering input alphabets over 
pairs of letters from a basis alphabet~$\Gamma$, the model can be 
used to recognise synchronous relations over $\Gamma$, 
that is, relations between words of the same length. 
We refer to a \dfa over an input alphabet $\Gamma \times \Gamma$ 
as a \emph{two-tape} \dfa. The 
relation recognised by such an automaton consists of all pairs 
of words $c_1 c_2 \ldots c_\ell, c'_1 c'_2 \ldots c'_\ell \in \Gamma^*$ 
such that $(c_1, c_1') (c_2, c_2') \ldots (c_\ell, c_\ell') \in L(\calA)$. 
With a slight abuse of notation, 
we also denote this relation by $L(\calA)$. 
We say that a synchronous relation is regular if it is recognised by a \dfa.

On the other hand, we consider automata with output. 
A \emph{Mealy} automaton is
a tuple  $(Q, \Gamma, \Sigma, q_\init, \delta, \lambda)$ 
where $(Q, \Gamma, q_\init, \delta) $ is a semi-automaton, $\Sigma$ 
is a finite \emph{output alphabet}, and 
$\lambda\colon Q \times \Gamma \to \Sigma$ is an output function.
To describe the external behaviour of such an automaton,
we define the extended output function 
$\lambda\colon \Gamma^* \times \Gamma \to \Sigma$
by setting $\lambda( \tau, c) := \lambda( \delta( q_\init, \tau), c )$
for every word $\tau \in \Gamma^*$ and every letter $c \in \Gamma$.
Thus, the external behaviour of a Mealy automaton 
defines a function from the set $\Gamma^+ := \Gamma^* \setminus \{\epsilon \}$ of nonempty words
to $\Sigma$.  
We say that a function on $\Gamma^+$ 
is \emph{regular}, if there exists a Mealy automaton
that defines it.

To build new automata from given ones, we will use two types of product constructions.
The \emph{synchronised product} of two semi-automata 
$\calA^1 = (Q^1, \Gamma, q^1_\init, \delta^1)$
and $\calA^2 = (Q^2, \Gamma, q^2_\init, \delta^2)$, 
over the same alphabet $\Gamma$, 
is the semi-automaton
$\calA^1 \times \calA^2 = (Q^\times,\Gamma, q^\times_\init, \delta^\times)$
with: 
\begin{itemize}
\item $Q^\times = Q^1 \times Q^2$, 
\item $q^\times_\init = \tuple{q^1_\init,q^2_\init}$, and
\item $\delta^\times(\tuple{q^1,q^2},c) 
= \tuple{\delta^{1}(q^1,c),\delta^{2}(q^2,c)}$
for all $q^1 \in Q^1$, $q^2 \in Q^2$, and $c \in \Gamma$.
\end{itemize}

In the second type of product construction, the two automata
run in parallel on separate input tapes, one for each automaton.
There is no synchronisation other than the number of processed input symbols, 
which is always the same in the two automata.
The \emph{parallel product} of two semi-automata 
$\calA^1 = (Q^1, \Gamma^1, q^1_\init, \delta^1)$
and $\calA^2 = (Q^2, \Gamma^2, q^2_\init, \delta^2)$ 
is the semi-automaton 
$\calA^1 \prl \calA^2 = \tuple{Q^\Vert,\Gamma^1 \times \Gamma^2,q^\Vert_I,\delta^\Vert}$
where: 
\begin{itemize}
\item $Q^{\Vert} = Q^1 \times Q^2$, 
\item $q^{\Vert}_\init = \tuple{q^1_\init,q^2_\init}$, and
\item $\delta^{\Vert}(\tuple{q^1,q^2},\tuple{c^1,c^2}) 
= \tuple{\delta^{1}(q^1,c^1),\delta^{2}(q^2,c^2)}$
for all $q^i \in Q^i$ and $c^i \in \Gamma^i$ (with $i=1,2$).
\end{itemize}

\subsection{Repeated games with imperfect information}

In our general setup, we consider games played in an infinite sequence of stages. 
In each stage, every player chooses
an action from a given set of alternatives, independently and simultaneously.
As a consequence, this determines a move that is
recorded in the play history. Then,
the game proceeds to the next stage.
The outcome of the play is thus an infinite sequence of moves.

Decisions of a player are based on the available information, which we
model by a partition of the set of play histories into information sets: 
at the beginning of each stage game, the player is informed of the information set
to which the
actual play history belongs (in the partition associated to the player). 
Accordingly, a strategy for a player is 
a function from information sets to actions.
Every strategy profile (that is, a collection of strategies, one for each player)
determines a play.

Basic questions in this setup concern strategies of an individual player to enforce an outcome in a
designated set of winning plays or to maximise the value of a given payoff function, 
regardless of the strategy of other players.
More advanced issues target joint strategies of coalitions among players towards
coordinating on a common objective, or equilibrium profiles.   
Scenarios where the available actions depend on the history,
or where the play might end after finitely many stages, can be captured by adjusting the
information partition together with the payoff or winning condition.

For our formal treatment of information structures,
we use the model of abstract infinite games as introduced by Thomas in his seminal paper on
strategy synthesis~\cite{Thomas95};
the relevant questions for more elaborate settings, such as infinite games on finite graphs
or concurrent game structures can be reduced easily to this abstraction.  
The underlying model is consistent with the classical definition of extensive games with 
information partitions and perfect recall due to 
von Neumann and Morgenstern \cite{vNM44}, in the formulation of   
Kuhn~\cite{Kuhn53}. 
For a more detailed account on partitional information, we refer 
to Bacharach~\cite{Bacharach85} and Geanakoplos~\cite{Geanakoplos92}.   

Our formalisation captures the 
information structures of
repeated games with imperfect monitoring as studied in 
non-cooperative game theory
(see the survey of Gossner and Tomala~\cite{GossnerTom09}),
and of infinite games with partial observation on finite-state systems
as studied in computing science (see Reif~\cite{Reif84},
Lin and Wonham~\cite{LinWon1988}, van der Meyden and Wilke~\cite{vanderMeydenWil2005},
Chatterjee et al.~\cite{ChatterjeeDHR07},
Berwanger et al.~\cite{BerwangerKP11}).
For background on the modelling of knowledge,
and the notion of synchronous perfect recall we refer to
Chapter 8 in the book of Fagin et al.~\cite{FaginHMV03}.

\subsubsection{Move and information structure}

As a basic object for describing a game,
we fix a finite set $\Gamma$ of \emph{moves}.
A~\emph{play} is an infinite sequence 
of moves $\pi = c_1 c_2 \ldots \in \Gamma^\omega$. 
A \emph{history} (of length $\ell$) is a finite prefix 
$\tau = c_1 c_2 \ldots c_\ell \in \Gamma^*$ of a play; 
the empty history $\init$ has length zero. 
The \emph{move structure} of the game is the set $\Gamma^*$ of histories equipped with
the successor relation, which consists of all pairs $(\tau, \tau c)$
for $\tau \in \Gamma^*$ and $c \in \Gamma$.
For convenience, we denote
the move structure of a game on $\Gamma$ simply by $\Gamma^*$ omitting
the (implicitly defined) successor relation.


The information available to a player 
is modelled abstractly by a  
partition $\calU$ of the set $\Gamma^*$
of histories; 
the parts of $\calU$ are called \emph{information sets} (of the player). 
The intended meaning is that if the actual history belongs to an 
information set~$U$, then the player considers every history in $U$ 
possible.
The particular case where all information sets in the partition 
are singletons characterises the setting of \emph{perfect information}. 

The \emph{information structure} (of the player)
is the quotient $\Gamma^*/_\calU$ of the move structure by the
information partition. That is, the first-order structure on the domain
consisting of the information sets, with a binary relation connecting
two information sets $(U, U')$ 
whenever there exists a history $\tau \in U$ 
with a successor history $\tau c \in U'$.
Throughout this article, we assume the perspective of just one player,
so we simply refer to the information structure of the game.

Our information model is \emph{synchronous}, which means, intuitively, 
that the player always knows how many stages have been played. 
Formally, this amounts to asserting that 
all histories in an information set have the same length; 
in particular the empty history forms a singleton information set.
Further, we assume that the player has \emph{perfect recall}\,---\, 
he never forgets what he knew previously.
Formally, if an information set contains nonempty 
histories $\tau c$ and $\tau' c'$, 
then the predecessor history $\tau$ is in the same information set as 
$\tau'$.
In different terms, an information partition satisfies 
synchronous perfect recall if,
whenever a pair of histories
$c_1 \ldots c_\ell$ and $c'_1 \ldots c'_\ell$ belongs to an information set,
then 
for every stage $t \le \ell$,
the prefix histories $c_1 \ldots c_t$ and $c'_1 \ldots c'_t$ 
belong to the same information set.
As a direct consequence, the information structures that arise from such partitions
are indeed trees.

\begin{lemma}\label{lem:info-tree}
For every information partition~$\calU$ of perfect synchronous recall, 
the information structure $\Gamma^*/_\calU$ is a directed tree.
\end{lemma}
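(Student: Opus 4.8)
The plan is to verify the defining properties of a directed tree: a unique root, a unique parent for every non-root node, and reachability of every node from that root. First I would exploit synchrony to equip each information set with a well-defined \emph{level}, namely the common length~$\ell$ of all histories it contains; this is well-defined precisely because the partition groups together only histories of equal length. By the definition of the successor relation on $\Gamma^*/_\calU$, an edge $(U, U')$ can arise only when $U$ has some level~$\ell$ and $U'$ has level~$\ell+1$, so every edge increases the level by exactly one. In particular, the empty history~$\init$ forms the unique singleton information set at level~$0$, and it is the only candidate for the root.

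Next I would establish that every information set $U'$ at level $\ell+1$ has exactly one parent. For existence, pick any history $\tau c \in U'$; its predecessor~$\tau$ lies in some information set~$U$, and $(U, U')$ is then an edge by definition. For uniqueness, I would invoke perfect recall: if $\tau c$ and $\tau' c'$ both belong to $U'$, then $\tau$ and $\tau'$ lie in the same information set. Hence the predecessors of all histories in $U'$ collapse into a single information set, which is therefore the unique parent of $U'$. This step is the crux of the argument, and the only one that genuinely uses the perfect-recall hypothesis; the main obstacle is making precise that the recall condition, stated on individual pairs of histories, lifts correctly to a statement about the entire preimage of~$U'$ under the successor relation.

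Finally I would confirm acyclicity and reachability. Acyclicity is immediate from the levelling, since any directed path strictly increases the level and so cannot return to its start. For reachability from the root, I would argue by induction on the level: the root is reachable trivially, and any node at level $\ell+1$ reaches the root through its unique parent at level~$\ell$, which is reachable by the induction hypothesis. Combining the uniqueness of parents with reachability yields that every node admits a unique directed path from the root, which is exactly the characterisation of a directed tree. Once the predecessor-collapsing observation of the previous paragraph is in hand, this remaining bookkeeping is routine.
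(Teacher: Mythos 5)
Your proof is correct and follows exactly the route the paper intends: the paper states this lemma without an explicit proof, presenting it as a ``direct consequence'' of the rephrased perfect-recall condition, and your argument (levels from synchrony, the singleton $\{\init\}$ as the unique root, existence of a parent from the successor relation, uniqueness of the parent from perfect recall, acyclicity via the levelling, reachability by induction on length) is precisely the routine verification the paper elides. The predecessor-collapsing step you flag as the crux goes through just as you describe: any two histories of an information set $U'$ at positive level are nonempty, perfect recall puts their predecessors in a common information set, and disjointness of the partition then forces any two parents of $U'$ to coincide.
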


We will use the term \emph{information tree} when referring to the information structure
associated with an information partition with synchronous perfect recall.

In the following, we discuss two alternative representations of information partitions.

\subsubsection{Observation}

The first alternative consists in describing
the information received by the player in each stage.
To do so, we specify a set~$\Sigma$ of \emph{observation symbols} and 
an \emph{observation function} $\beta\colon \Gamma^+ \to \Sigma$.
Intuitively, the player observes at every nonempty history~$\tau$ 
the symbol $\beta(\tau)$; 
under the assumption of perfect recall, 
the information available to the player at history~$\tau = c_1 c_2 \ldots c_\ell$ 
is thus represented by the 
sequence of observations
$\beta( c_1 ) \beta (c_1 c_2) \ldots \beta( c_1 \ldots c_\ell)$, which we call 
\emph{observation history} (at $\tau$); let us denote by 
$\hat{\beta} \colon \Gamma^* \to \Sigma^*$ the function 
that returns, for each play history, the corresponding observation history.

The information partition $\calU_\beta$ represented by 
an observation function~$\beta$ is the collection of sets
$U_\eta := \{ \tau \in \Gamma^* \mid \hat{\beta}( \tau ) = \eta \}$
indexed by observation 
histories $\eta \in \hat{\beta}( \Gamma^* )$.
Clearly, information partitions described in this way 
verify the conditions of synchronous perfect recall:
each information set $U_\eta$ consists of
histories of the same length (as $\eta$), 
and for every pair~$\tau, \tau'$ of histories with different observations 
$\hat{\beta}(\tau) \neq \hat{\beta}(\tau')$, and every pair of moves~$c, c' \in \Gamma$, the observation history of
the successors $\tau c$ and $\tau' c'$ will also differ
$\hat{\beta}( \tau c ) \neq \hat{\beta}( \tau' c')$.

To describe observation functions by a finite-state automaton,  
we fix a \emph{finite} set~$\Sigma$ of observations and 
specify a Mealy automaton 
$\calM = (Q, \Gamma, \Sigma, q_\init, \delta, \lambda)$, 
with moves from  $\Gamma$ as input and observations from~$\Sigma$ as output.
Then, we consider the extended output function of $\calM$
as an observation function
$\beta_\calM \colon\Gamma^+ \to \Sigma$.

\begin{figure}[!tbp]
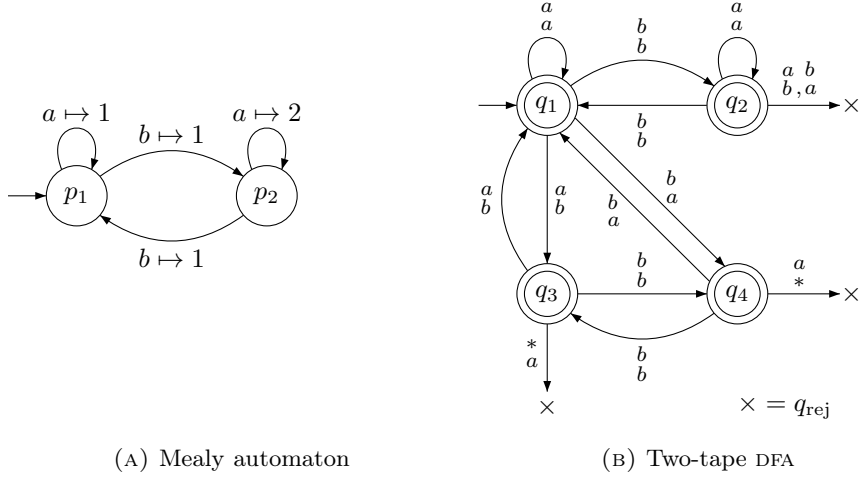

  \centering
  \begin{subfigure}[b]{.48\linewidth}

    \begin{gpicture}(42,57)(0,0)
      \gasset{Nw=8,Nh=8,Nmr=4, rdist=1, loopdiam=5}
      \node[Nmarks=i, Nframe=y](q1)(10,31){$p_1$}
      \node[Nmarks=n, Nframe=y](q2)(35,31){$p_2$}
      \drawedge[ELpos=50, ELside=l, curvedepth=6](q1,q2){$b \mapsto 1$}
      \drawedge[ELpos=50, ELside=l, curvedepth=6](q2,q1){$b \mapsto 1$}
      \drawloop[ELside=l,loopCW=y, loopdiam=5, loopangle=90](q1){$a \mapsto 1$}
      \drawloop[ELside=l,loopCW=y, loopdiam=5, loopangle=90](q2){$a \mapsto 2$}
    \end{gpicture}
     \caption{Mealy automaton\label{fig:running-example-t}}
  \end{subfigure}~
  \begin{subfigure}[b]{.48\linewidth}
    \begin{gpicture}(52,57)(0,0)
      \gasset{Nw=8,Nh=8,Nmr=4, rdist=1, loopdiam=5}     
      \node[Nmarks=ir, Nframe=y](q1)(10,43){$q_1$}
      \node[Nmarks=r, Nframe=y](q2)(35,43){$q_2$}
      \node[Nmarks=r, Nframe=y](q3)(10,18){$q_3$}
      \node[Nmarks=r, Nframe=y](q4)(35,18){$q_4$}
      \node[Nmarks=n, Nframe=n, Nw=3, Nh=4](rej3)(10,3){$\times$}
      \node[Nmarks=n, Nframe=n, Nw=3](rej2)(50,43){$\times$}
      \node[Nmarks=n, Nframe=n, Nw=3](rej4)(50,18){$\times$}
      \put(35,3){\makebox(0,0)[l]{$\times = \reject$}}
      \drawedge[ELpos=50, ELside=l, curvedepth=6](q1,q2){$\fbi{b}{b}$}
      \drawedge[ELpos=50, ELside=l, curvedepth=0](q2,q1){$\fbi{b}{b}$}
      \drawedge[ELpos=50, ELside=l, ELdist=1, curvedepth=0](q1,q3){$\fbi{a}{b}$}
      \drawedge[ELpos=50, ELside=l, ELdist=1, curvedepth=6](q3,q1){$\fbi{a}{b}$}
      \drawedge[ELpos=50, ELside=l, curvedepth=0](q3,q4){$\fbi{b}{b}$}
      \drawedge[ELpos=50, ELside=l, curvedepth=6](q4,q3){$\fbi{b}{b}$}
      \drawedge[ELpos=55, ELside=l, ELdist=0.5, sxo=2, eyo=2, curvedepth=0](q1,q4){$\cfbi{b}{a}$}
      \drawedge[ELpos=55, ELside=l, ELdist=0, sxo=-2, eyo=-2, curvedepth=0](q4,q1){$\cfbi{b}{a}$}
      \drawedge[ELpos=55, ELside=l, curvedepth=0](q2,rej2){$\fbi{a}{b}\fbi{}{,\:\!}\fbi{b}{a}$}
      \drawedge[ELpos=52, ELside=r, ELdist=1, curvedepth=0](q3,rej3){$\cfbi{*}{a}$}
      \drawedge[ELpos=55, ELside=l, curvedepth=0](q4,rej4){$\cfbi{a}{*}$}
      \drawloop[ELside=l,loopCW=y, loopdiam=5, loopangle=90](q1){$\cfbi{a}{a}$}
      \drawloop[ELside=l,loopCW=y, loopdiam=5, loopangle=90](q2){$\cfbi{a}{a}$}
    \end{gpicture}
    \caption{Two-tape \dfa\label{fig:running-example-a}}
  \end{subfigure}
  \vspace{2pt}
  \caption{A Mealy automaton and a two-tape \dfa over alphabet $\Gamma = \{a,b\}$ 
describing the same information partition (the symbol $*$ stands for $\{a,b\}$) \label{fig:geom}}
\end{figure}

To illustrate, \figurename~\ref{fig:running-example-t} shows a Mealy automaton defining
an observation function. 
The input alphabet is the set $\Gamma = \{a,b\}$ of moves,
and the output alphabet is the set $\{1,2\}$ of observations. 
For example, the histories $abb$ and $bba$ map to the same observation sequence, namely $111$,
thus they belong to the same information set; the information partition on
histories of length $2$ is $\{aa,ab,bb\},\{ba\}$.\medskip

This formalism captures the standard approach
for describing information in finite-state systems (see, e.g.,
Reif~\cite{Reif84}, Lin and Wonham~\cite{LinWon1988},
Kupferman and Vardi~\cite{KupfermanVar2000}, 
van der Meyden and Wilke~\cite{vanderMeydenWil2005}).

\subsubsection{Indistinguishability}

As a second alternative, we represent information partitions as 
equivalence relations between histories, such that the
equivalence classes correspond to information sets.
Intuitively, a player cannot distinguish between equivalent histories.

We say that an equivalence relation is an \emph{indistinguishability} relation if the
represented information partition satisfies the conditions of synchronous perfect recall.
The following characterisation simply rephrases the relevant
conditions for partitions in terms of equivalence relations.

\begin{lemma}\label{lem:ind-char} An equivalence relation $R \subseteq \Gamma^* \times \Gamma^*$
  is an indistinguishability relation if, and only if, it satisfies the following properties:
\begin{enumerate}[(1)]
\item{\label{it:ind-sync}} For every pair $(\tau, \tau') \in R$, the histories $\tau, \tau'$ are of the same length.
\item\label{it:ind-prefcl} For every pair of histories $\tau, \tau' \in R$ of length $\ell$,
  every pair $(\rho, \rho')$ of histories of length~$t \le \ell$
  that occur as prefixes of $\tau, \tau'$, respectively,
  is also related by~$(\rho, \rho') \in R$.   
\end{enumerate}   
\end{lemma}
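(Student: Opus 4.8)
The plan is to observe that the statement is essentially a translation between the partition-theoretic formulation of synchronous perfect recall given above and the relational vocabulary of~$R$. Since $R$ is assumed to be an equivalence relation, it induces an information partition $\calU_R$ whose parts are exactly the $R$-equivalence classes, and by definition $R$ is an indistinguishability relation precisely when $\calU_R$ satisfies synchronous perfect recall. Thus it suffices to show that the synchronicity condition corresponds to property~(\ref{it:ind-sync}) and that the perfect-recall condition corresponds to property~(\ref{it:ind-prefcl}); both directions of the biconditional then follow simultaneously.

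First I would handle synchronicity. Two histories lie in a common part of $\calU_R$ exactly when they are $R$-related, so the requirement that all histories in any information set share the same length is literally the assertion that $(\tau,\tau') \in R$ implies $\abs{\tau} = \abs{\tau'}$, which is property~(\ref{it:ind-sync}). In particular this already forces the empty history to form a singleton class, since by~(\ref{it:ind-sync}) anything $R$-related to $\init$ has length $0$ and hence equals $\init$; so the special-case requirement on $\init$ needs no separate argument.

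Next I would treat perfect recall, where the only genuine work lies. The partition formulation is stated in single-step form: if $\tau c$ and $\tau' c'$ are $R$-related, then so are their immediate predecessors $\tau$ and $\tau'$. Property~(\ref{it:ind-prefcl}), by contrast, is the multi-step prefix-closure statement, that all equal-length prefixes of a related pair are again related. The implication from~(\ref{it:ind-prefcl}) to the single-step condition is immediate (take $t = \ell-1$). For the converse I would argue by downward induction on the stage~$t$: given $(\tau,\tau') \in R$ of length $\ell$, repeated application of the single-step condition strips off one letter from each side at a time, using property~(\ref{it:ind-sync}) to guarantee that both prefixes shrink in lockstep and remain of equal length, until the desired prefixes of length~$t$ are reached. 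This establishes the equivalence of the two perfect-recall formulations and matches it with property~(\ref{it:ind-prefcl}).

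The hard part, to the extent there is one, is bookkeeping rather than mathematics: ensuring that the prefixes referred to in~(\ref{it:ind-prefcl}) are genuinely the length-$t$ prefixes of the \emph{specific} related histories $\tau,\tau'$, so that the induction stays within a single descending chain, and checking that reflexivity, symmetry and transitivity of~$R$ are exactly what is needed for $\calU_R$ to be a well-defined partition in the first place. Combining the two correspondences, $R$ is an indistinguishability relation if, and only if, both~(\ref{it:ind-sync}) and~(\ref{it:ind-prefcl}) hold.
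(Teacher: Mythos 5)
Your proposal is correct and follows exactly the route the paper intends: the paper offers no explicit proof, stating only that the lemma ``simply rephrases'' the partition conditions in relational terms, and your argument is precisely that rephrasing made explicit. The only substantive step you add --- the downward induction showing the single-step predecessor condition implies full prefix-closure --- is the routine detail the paper glosses over (it already states both formulations of perfect recall as equivalent in the surrounding text), and your handling of it is sound.
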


As a finite-state representation,
we will consider indistinguishability relations recognised by two-tape automata.
To illustrate, \figurename~\ref{fig:running-example-a} shows a two-tape automaton
that defines the same information partition as the Mealy automaton of \figurename~\ref{fig:running-example-t}.
Here and throughout the paper, the state $\reject$ represents a rejecting sink state.
For example, the pair of words $\tau_1,\tau_2$ where $\tau_1 = abb$
and $\tau_2 = bba$ is accepted by the automaton (the state $q_1$ is accepting), 
meaning that the two words are indistinguishable. \medskip

Given a two-tape automaton 
$\calA = \tuple{Q, \Gamma \times \Gamma, q_\init, \delta, F}$,
the recognised relation $L(\calA)$ is, by definition, synchronous and
hence satisfies condition~(\ref{it:ind-sync}) of Lemma~\ref{lem:ind-char}. 
To decide whether $\calA$ indeed represents an indistinguishability relation, we can 
use standard automata-theoretic techniques to verify that
$L(\calA)$ is an equivalence relation, and that it 
satisfies the perfect-recall condition~(\ref{it:ind-prefcl}) of Lemma~\ref{lem:ind-char}.

\begin{lemma}
  The question whether a given two-tape automaton 
  recognises an indistinguishability relation 
  with perfect recall is decidable in polynomial (actually, cubic) time. 
\end{lemma}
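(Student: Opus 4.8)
The plan is to verify, one by one, the defining properties of an indistinguishability relation, observing that each amounts to a containment or an equivalence between synchronous relations over $\Gamma$, and that every such test reduces to the emptiness problem for a suitable product automaton. Since $\calA$ reads pairs of letters synchronously, $L(\calA)$ automatically consists of pairs of words of equal length, so condition~(\ref{it:ind-sync}) of Lemma~\ref{lem:ind-char} holds for free. It remains to check that $L(\calA)$ is reflexive, symmetric, and transitive\,---\,so that it is an equivalence relation\,---\,and that it satisfies the prefix-closure condition~(\ref{it:ind-prefcl}).

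For reflexivity, I would note that $(\tau,\tau)\in L(\calA)$ for every history $\tau$ exactly when every state reachable from $q_\init$ along the diagonal letters $(c,c)$, $c\in\Gamma$, is accepting; this is a single graph reachability computation, linear in $\abs{\calA}$. For symmetry, let $\calA^{\mathsf T}$ be the copy of $\calA$ in which each input letter $(c,c')$ is relabelled $(c',c)$, so that $L(\calA^{\mathsf T})=\{(\tau',\tau)\mid(\tau,\tau')\in L(\calA)\}$; symmetry is then the equality $L(\calA)=L(\calA^{\mathsf T})$. As $\calA$ is deterministic and complete, its complement $\overline{\calA}$, obtained by toggling $F$, has the same state set, so each of the two inclusions $L(\calA)\subseteq L(\calA^{\mathsf T})$ and $L(\calA^{\mathsf T})\subseteq L(\calA)$ reduces to emptiness of a synchronised product with $O(\abs Q^2)$ states.

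The transitivity check is the delicate step, and it is what forces the cubic bound. Transitivity is the containment $L(\calA)\circ L(\calA)\subseteq L(\calA)$, where $\circ$ denotes relational composition. Because the relations are length-preserving, I can recognise $L(\calA)\circ L(\calA)$ by a nondeterministic two-tape automaton that, while reading a pair $(a,c)$, guesses an intermediate letter $b\in\Gamma$ and simulates one copy of $\calA$ on $(a,b)$ and a second copy on $(b,c)$, accepting when both copies are in accepting states; this automaton has $O(\abs Q^2)$ states. Taking its product with the complement $\overline{\calA}$ yields an automaton with $O(\abs Q^3)$ states whose language is empty precisely when $L(\calA)$ is transitive, and emptiness of a nondeterministic automaton is linear in the size of the product, giving the cubic running time. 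This composition product is the main obstacle: symmetry and reflexivity stay within quadratic size, but guessing the intermediate word intrinsically squares the state space before the inclusion test adds one more factor of $\abs Q$.

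Finally, for the perfect-recall condition~(\ref{it:ind-prefcl}) I would read it contrapositively: if a prefix pair $(\rho,\rho')$ is unrelated, then no common-length extension of it is related. In automaton terms this says that on every run of $\calA$ from $q_\init$ to an accepting state, every intermediate state is already accepting; equivalently, every state that is both reachable from $q_\init$ and co-reachable to $F$ must lie in $F$. Both the reachable and the co-reachable state sets are obtained by linear-time forward and backward search, so this last test costs $O(\abs{\calA})$. Combining the four tests and reporting success only when all of them pass, the overall procedure runs in cubic time, the bottleneck being the composition product used for transitivity.
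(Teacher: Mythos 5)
Your proposal is correct and follows essentially the same route as the paper, which merely asserts that synchronicity holds by definition and that the equivalence-relation and perfect-recall conditions of Lemma~\ref{lem:ind-char} can be verified by standard automata-theoretic techniques; your write-up supplies exactly those standard constructions (diagonal reachability for reflexivity, transposition plus product for symmetry, the guessed-intermediate-letter composition product for transitivity, and prefix-closedness via reachable/co-reachable states for perfect recall), with the transitivity product correctly identified as the source of the cubic bound.
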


The idea of using finite-state automata to describe information constraints of players in infinite games
has been advanced in a series of work by Maubert and different
coauthors~\cite{Maubert14,MaubertPin2014,BozzelliMauPin2015,DimaMauPin2018},
with the aim of
extending the classical framework of temporal logic and automata for perfect-information
games to more expressive structures.
In the general setup, the formalism features binary relations between histories
that can be asynchronous and may not satisfy perfect recall.
The setting of synchronous perfect recall is adressed as a particular case described by a one-state
automaton that compares observation sequences rather than move histories.  
This allows to capture indistinguishability relations that
actually correspond to regular observation functions in our setup. 

Another approach of  relating game histories via automata has been
proposed recently by Fournier and Lhote~\cite{FournierLho2019}.
The authors extend our framework to arbitrary
synchronous relations, which are not necessarily prefix closed\,---\,and thus do not
satisfy perfect recall.

\subsubsection{Equivalent representations}

In general, any partition of a set~$X$ can be represented either as 
an equivalence relation on $X$ ---\,equating the elements of each part\,--- 
or as a (complete) invariant function, that is a function~$f\colon X \to Z$ such that
$f( x ) = f( y )$ if, and only if, $x, y$ belong to the same part.
Thus equivalence relations and invariant functions represent different
faces of the same mathematical object. The correspondence is witnessed by the following canonical maps.

For every function $f\colon X \to Z$, the \emph{kernel} relation 
\begin{align*}
  \ker f := \{(x,y) \in X \times X~\mid~ f(x) = f(y) \}
\end{align*} is an equivalence. 
Given an equivalence relation~$\mathop{\sim} \subseteq X \times X$, 
the \emph{quotient map} $[\,\cdot\,]_\sim\colon X \to 2^X$, which 
sends each element $x \in X$ to its equivalence class $[x]_\sim := \{ y \in X ~\mid~y \sim x\}$,
is a complete invariant function for~$\sim$. 
Notice that the kernel of the quotient map is just $\mathop{\sim}$.

For the case of information partitions with synchronous perfect recall,
the above correspondence relates indistinguishability relations and
observation-history functions.

\begin{lemma}\label{lem:correspondence-gen}
  If $\beta\colon \Gamma^* \to \Sigma$ is an observation function, 
  then $\ker \hat{\beta}$ is an indistinguishability relation
  that describes the same information partition.   
  Conversely, if $\mathop{\sim}$ is an indistinguishability relation, 
  then the quotient map is an observation function
  that describes the same information partition.
\end{lemma}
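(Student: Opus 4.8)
The plan is to treat the two directions separately, in each case verifying the two structural conditions of Lemma~\ref{lem:ind-char} and then checking that the induced partition is the intended one. The preceding remarks on $\ker f$ and the quotient map already give the purely formal half of the correspondence; what remains is to match these constructions against the specific shape of observation functions, where the stagewise structure and perfect recall come into play.

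For the first direction I would start from the fact that $\ker\hat\beta$ is an equivalence relation for formal reasons (the kernel of any function is one), so only conditions~(\ref{it:ind-sync}) and~(\ref{it:ind-prefcl}) need checking. For condition~(\ref{it:ind-sync}), the point is that $\hat\beta$ is length-preserving, $\abs{\hat\beta(\tau)} = \abs{\tau}$, since $\hat\beta(c_1\ldots c_\ell)$ is a sequence of exactly $\ell$ observation symbols; hence $\hat\beta(\tau) = \hat\beta(\tau')$ forces $\abs{\tau} = \abs{\tau'}$. For condition~(\ref{it:ind-prefcl}), the key remark is that $\hat\beta$ commutes with taking prefixes: if $\rho$ of length $t$ is a prefix of $\tau$, then $\hat\beta(\rho)$ is exactly the length-$t$ prefix of $\hat\beta(\tau)$. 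Consequently, if $\hat\beta(\tau) = \hat\beta(\tau')$ then the length-$t$ prefixes of the two observation histories agree, so $\hat\beta(\rho) = \hat\beta(\rho')$ and $(\rho,\rho') \in \ker\hat\beta$. Finally, the class of $\tau$ under $\ker\hat\beta$ is precisely $U_{\hat\beta(\tau)}$, so the partition induced by $\ker\hat\beta$ is literally $\calU_\beta$.

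For the converse I would first make explicit in which sense the quotient map is an observation function: taking the observation alphabet $\Sigma$ to be the set of $\sim$-classes, I define $\beta(\tau) := [\tau]_\sim$ for every nonempty history~$\tau$. The corresponding observation history is then $\hat\beta(c_1\ldots c_\ell) = [c_1]_\sim\,[c_1 c_2]_\sim \cdots [c_1\ldots c_\ell]_\sim$, the sequence of classes of all nonempty prefixes. The goal is to show $\ker\hat\beta = \mathop{\sim}$, which immediately yields that $\beta$ describes the same partition. The inclusion $\ker\hat\beta \subseteq \mathop{\sim}$ is immediate, since equal observation histories have equal final symbols, i.e.\ $[\tau]_\sim = [\tau']_\sim$, hence $\tau \sim \tau'$.

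The heart of the argument\,---\,and the step I expect to be the main obstacle\,---\,is the reverse inclusion $\mathop{\sim} \subseteq \ker\hat\beta$, which is exactly where perfect recall is used. Here I would take $\tau \sim \tau'$, apply condition~(\ref{it:ind-sync}) to get $\abs{\tau} = \abs{\tau'}$, and then invoke condition~(\ref{it:ind-prefcl}) stagewise: for each $t \le \abs{\tau}$ the length-$t$ prefixes of $\tau$ and $\tau'$ are again $\sim$-related, hence carry the same class label, so $\beta$ agrees on them at every stage and $\hat\beta(\tau) = \hat\beta(\tau')$. Without perfect recall this stagewise matching would fail and the per-stage observation function could not recover~$\mathop{\sim}$; so this is the one place where the structural hypothesis on $\sim$ is genuinely used, beyond the formal kernel/quotient correspondence recalled before the lemma.
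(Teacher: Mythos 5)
Your proof is correct and follows essentially the route the paper intends: the lemma is stated there without a separate proof, as a direct consequence of the general kernel/quotient correspondence together with the verification (given just before, when $\calU_\beta$ is introduced) that observation functions yield synchronous perfect-recall partitions, and your argument is precisely a careful elaboration of these two ingredients via the conditions of Lemma~\ref{lem:ind-char}. In particular, your identification of the reverse inclusion $\mathop{\sim} \subseteq \ker\hat{\beta}$ as the one place where perfect recall is genuinely needed matches the paper's reasoning.
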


Accordingly, every information partition given by an indistinguishability relation
can be alternatively represented by an observation function, and vice versa.
However, if we restrict to finite-state representations,
the correspondence might not be preserved.
In particular, 
as the quotient map of any indistinguishability relation on $\Gamma^*$
has infinite range (histories of different length are always distinguishable),
it is not definable by a Mealy automaton, which has finite output alphabet.  

\section{Observation is Weaker than Distinction}


Firstly, we shall see that, for every regular observation function, 
the corresponding indistinguishability relation is also regular.

\begin{proposition}\label{prop:transducer-to-two-tape}
  For every observation function $\beta$ given by a Mealy automaton of size $m$, 
  we can construct a two-tape \dfa of size $O(m^2)$ that defines the
  corresponding indistinguishability relation $\ker \hat{\beta}$.
\end{proposition}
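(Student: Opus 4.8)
The plan is to run two copies of the Mealy automaton $\calM$ in parallel, one on each tape, and to compare the emitted observations stage by stage, sending the pair to a rejecting sink as soon as they disagree.

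First I would record the structural description of $\ker\hat{\beta}$ that drives the construction. Since $\hat{\beta}$ sends a history of length $\ell$ to an observation word of length~$\ell$, any two histories with equal observation histories have equal length; writing $\tau = c_1\cdots c_\ell$ and $\tau' = c'_1\cdots c'_\ell$, the condition $\hat{\beta}(\tau)=\hat{\beta}(\tau')$ unfolds to $\beta(c_1\cdots c_t)=\beta(c'_1\cdots c'_t)$ for every $t\le\ell$. Crucially, by definition of the extended output function, $\beta(c_1\cdots c_t)=\lambda(\delta(q_\init,c_1\cdots c_{t-1}),c_t)$, so the $t$-th observation is determined by the state reached after the first $t-1$ moves together with the $t$-th move.

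This suggests forming the parallel product $\calM\prl\calM$ of the underlying semi-automaton with itself, with state set $Q\times Q$ over the alphabet $\Gamma\times\Gamma$ and product transitions $\delta^\Vert(\tuple{p,p'},\tuple{c,c'})=\tuple{\delta(p,c),\delta(p',c')}$, and then augmenting it with a rejecting sink $\reject$ to obtain a two-tape \dfa. The transition on the letter $\tuple{c,c'}$ from a state $\tuple{p,p'}$ first computes the two observations $\lambda(p,c)$ and $\lambda(p',c')$; if they differ the automaton moves to $\reject$, and otherwise it follows the product transition to $\tuple{\delta(p,c),\delta(p',c')}$. The initial state is $\tuple{q_\init,q_\init}$, and every state except $\reject$ is declared accepting. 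Correctness then follows by a straightforward induction on the length of the input pair: as long as every stagewise observation has matched, after reading the pair-prefix of length~$t$ the automaton sits in $\tuple{\delta(q_\init,c_1\cdots c_t),\delta(q_\init,c'_1\cdots c'_t)}$, while it has entered $\reject$ by step~$t$ exactly when some earlier observation disagreed. Hence a pair of equal-length histories is accepted iff all stagewise observations coincide, i.e.\ iff it belongs to $\ker\hat{\beta}$; the empty pair is accepted because $\tuple{q_\init,q_\init}$ is accepting, matching $\hat{\beta}(\init)=\init$. For the size bound, the automaton has $\abs{Q}^2+1$ states over $\Gamma\times\Gamma$, hence $O(\abs{Q}^2\cdot\abs{\Gamma}^2)=O((\abs{Q}\cdot\abs{\Gamma})^2)=O(m^2)$ transitions.

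I would expect no deep obstacle here; the only points demanding care are the Mealy output convention and the accepting condition. Because observations are emitted on transitions rather than attached to states, the $t$-th observation must be read off from the \emph{predecessor} state together with the current move, and the comparison must happen before the product transition is applied. Likewise the accepting condition must make $\reject$ absorbing and treat every other state as accepting, so that a single stagewise mismatch permanently invalidates the pair rather than only the stage at which it occurs.
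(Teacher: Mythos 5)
Your proposal is correct and is essentially the paper's own proof: the paper likewise runs the Mealy automaton on both tapes simultaneously (a parallel product of two copies of the semi-automaton) and sends the computation to a rejecting sink the moment the two emitted observations disagree, with all other states accepting. Your added details (the induction establishing correctness, the careful handling of the transition-based output convention, and the explicit count of $(\abs{Q}^2+1)\cdot\abs{\Gamma}^2 = O(m^2)$ transitions) merely spell out what the paper leaves implicit.
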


\begin{proof} 
  To construct such a two-tape automaton, we
  run the given Mealy automaton on the two input tapes simultaneously,
  and send it into a rejecting sink state whenever the observation output 
  on the first tape differs from the output on the second tape.
  Accordingly, the automaton accepts a pair
  $(\tau, \tau') \in (\Gamma \times \Gamma)^* $ of histories,
  if and only if, their observation histories are equal
  $\hat{\beta}( \tau ) = \hat{\beta}( \tau' )$.    
\end{proof}

The statement of Proposition~\ref{prop:transducer-to-two-tape} is illustrated
in \figurename~\ref{fig:geom} where the structure of the two-tape \dfa of 
\figurename~\ref{fig:running-example-a} is obtained as a parallel product of
two copies of the Mealy automaton in \figurename~\ref{fig:running-example-t},
where $q_1 = \tuple{p_1,p_1}$, $q_2 = \tuple{p_2,p_2}$, $q_3 = \tuple{p_1,p_2}$, 
and $q_4 = \tuple{p_2,p_1}$.\medskip
  
For the converse direction, however, the model of imperfect information described by 
regular indistinguishability relations is strictly more expressive than the one 
based on regular observation functions.

\begin{figure}[!tb]
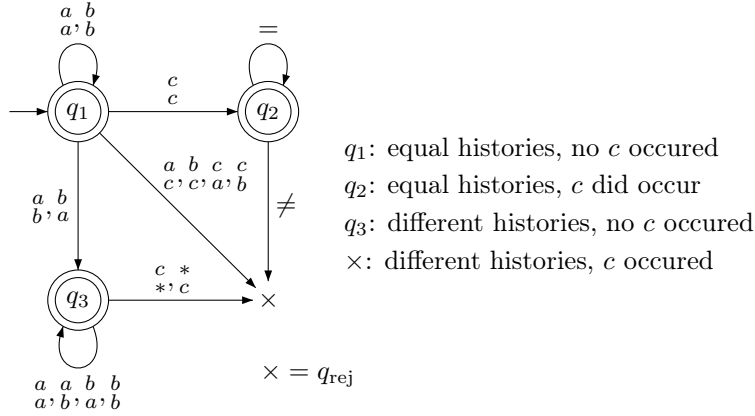

  \begin{center}

    \begin{gpicture}(100,62)(0,0)
      \gasset{Nw=8,Nh=8,Nmr=4, rdist=1, loopdiam=5}      

      \node[Nmarks=ir, Nframe=y](q1)(10,45){$q_1$}
      \node[Nmarks=r, Nframe=y](q2)(35,45){$q_2$}
      \node[Nmarks=r, Nframe=y](q3)(10,20){$q_3$}
      \node[Nmarks=n, Nframe=n, Nw=4, Nh=5](q4)(35,20){$\times$}
      \put(34,10){\makebox(0,0)[l]{$\times = \reject$}}
      
      \put(45,40){\makebox(0,0)[l]{$q_1\text{: equal histories, no } c \text{ occured}$}}
      \put(45,35){\makebox(0,0)[l]{$q_2\text{: equal histories, } c \text{ did occur}$}}
      \put(45,30){\makebox(0,0)[l]{$q_3\text{: different histories, no } c \text{ occured}$}}
      \put(45,25){\makebox(0,0)[l]{$\times\text{: different histories, } c \text{ occured}$}}

      \drawedge[ELpos=50, ELside=l, curvedepth=0](q1,q2){$\fbi{c}{c}$}

      \drawedge[ELpos=50, ELside=r, ELdist=1, curvedepth=0](q1,q3){$\fbi{a}{b},\fbi{b}{a}$}

      \drawedge[ELpos=50, ELside=l, ELdist=.5, curvedepth=0](q1,q4){$\cfbi{a}{c},\fbi{b}{c},\fbi{c}{a},\fbi{c}{b}$}
      \drawedge[ELpos=50, ELside=l, ELdist=1, curvedepth=0](q2,q4){$\neq$}
      \drawedge[ELpos=50, ELside=l, curvedepth=0](q3,q4){$\cfbi{c}{*},\cfbi{*}{c}$}

      \drawloop[ELside=l,loopCW=y, loopdiam=5, loopangle=90](q1){$\cfbi{a}{a},\fbi{b}{b}$}
      \drawloop[ELside=l,loopCW=y, ELdist=1, loopdiam=5, loopangle=90](q2){$=$}   
      \drawloop[ELside=l,loopCW=y, loopdiam=5, loopangle=270](q3){$\cfbi{a}{a},\fbi{a}{b},\fbi{b}{a},\fbi{b}{b}$}
    \end{gpicture}
  \caption{A two-tape \dfa 
defining an indistinguishability relation that does not correspond to any regular observation function (the symbol $=$ stands for $\{\tbi{a}{a}, \tbi{b}{b}, \tbi{c}{c}\}$,   
the symbol $\neq$ stands for $\{ \tbi{x}{y} \in \Gamma \times \Gamma \mid x \neq y\}$,
and the symbol $*$ stands for $\{a,b,c\}$) \label{fig:infinite}}
  \end{center}
\end{figure}

\begin{lemma}\label{lem:example}
There exists a regular indistinguishability relation 
that does not correspond to any regular observation function. 
\end{lemma}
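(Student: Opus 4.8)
The plan is to exhibit the relation recognised by the two-tape \dfa $\calA$ of \figurename~\ref{fig:infinite} and show that its information tree branches unboundedly, which no finite observation alphabet can match. Working over $\Gamma = \{a,b,c\}$, I would first read off from the state annotations that the accepted pairs $L(\calA)$ are exactly those $(\tau,\tau')$ of equal length for which either $\tau = \tau'$, or neither $\tau$ nor $\tau'$ contains the letter $c$; call this relation $R$. Regularity is then immediate, since $R = L(\calA)$. I would check that $R$ is an indistinguishability relation via Lemma~\ref{lem:ind-char}: synchronicity is built into the relation, and condition~(\ref{it:ind-prefcl}) holds because a prefix of a $c$-free word is $c$-free and prefixes of equal words are equal. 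Reflexivity and symmetry are obvious, and transitivity follows from a short case split (if either relating pair is an equality, substitute; if both are ``$c$-free, unequal'', the $c$-freeness propagates).

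The heart of the argument is a branching estimate on the information tree $\Gamma^*/_R$, which is a directed tree by Lemma~\ref{lem:info-tree}. I would first identify the information sets: for each length $\ell$ the $c$-free histories of length $\ell$ form a single set $U_\ell$ with $\abs{U_\ell} = 2^\ell$, while every history containing at least one $c$ is a singleton class. I would then compute the successors of the node $U_\ell$: extending a $c$-free history by $a$ or $b$ keeps it $c$-free and lands in $U_{\ell+1}$, whereas extending it by $c$ produces a singleton $\{\tau c\}$, and distinct $\tau \in U_\ell$ yield distinct singletons. Hence $U_\ell$ has exactly $2^\ell + 1$ successors in the information tree, so the out-degree of $\Gamma^*/_R$ grows without bound.

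To close the contradiction, I would contrast this with the branching any observation function can produce. Suppose, toward a contradiction, that some observation function $\beta\colon\Gamma^+ \to \Sigma$ with finite $\Sigma$ satisfies $\ker\hat{\beta} = R$. For an information set $U_\eta = \{\tau : \hat{\beta}(\tau) = \eta\}$, every successor is obtained by extending some $\tau \in U_\eta$ by a move $d$, and then $\hat{\beta}(\tau d) = \eta\,\beta(\tau d)$ lies in $\eta\Sigma$; thus every successor information set has the form $U_{\eta\sigma}$ with $\sigma \in \Sigma$, so $U_\eta$ has out-degree at most $\abs{\Sigma}$. Since the out-degree of $\Gamma^*/_R$ is unbounded, no finite $\Sigma$ can reproduce this tree, and in particular no Mealy automaton defines such a $\beta$. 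I expect the only delicate point to be the branching bookkeeping\,---\,making precise that the $2^\ell$ singletons $\{\tau c\}$ are genuinely distinct successor nodes while $U_{\ell+1}$ contributes a single further successor; everything else reduces to inspection of the automaton. Note that the argument in fact excludes every finite-alphabet observation function, regular or not, so the regularity hypothesis on $\beta$ is not even used.
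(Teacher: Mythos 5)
Your proposal is correct and follows essentially the same route as the paper: the same example relation (equal histories, or both $c$-free), the same observation that $U_\ell = \{a,b\}^\ell$ is an information set whose extensions by $c$ form $2^\ell$ distinct singleton successors, and the same conclusion that any observation function bounds the information tree's branching by $\abs{\Sigma}$, so no finite-range (a fortiori no regular) observation function exists. Your additional details\,---\,verifying the indistinguishability conditions via Lemma~\ref{lem:ind-char} and spelling out the degree bound $U_{\eta\sigma}$, $\sigma \in \Sigma$\,---\,are correct elaborations of steps the paper states more briefly.
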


\begin{proof}
As an example, consider a move alphabet with three letters 
$\Gamma := \{ a, b, c \}$, and let  
$\mathop{\sim} \in \Gamma^* \times \Gamma^*$ relate two histories $\tau, \tau'$ 
whenever they are equal or none of them contains the letter $c$. 
This is an indistinguishability relation, and it is recognised
by the two-tape automaton of \figurename~\ref{fig:infinite}.

We argue that the induced information tree has unbounded branching.
All histories of the same length $n$ that do not contain $c$ are indistinguishable,
hence $U_n = \{a,b\}^n$ is an information set. 
However, for every history $w \in U_n$ the history $w c$ forms a singleton information set.
Therefore $U_n$ has at least $2^n$ successors in the information tree, for every $n$.

However, for any observation function,
the degree of the induced information tree 
is bounded by the size of the observation alphabet.
Hence, the information partition described by~$\sim$
cannot be represented by an observation function of finite range and so,  
a fortiori, not by any regular observation function.
\end{proof}

\section{Which Distinctions Correspond to Observations}

We have just seen, as a necessary condition
for an indistinguishability relation to
be representable by a regular observation function,
that the information tree needs to be of bounded branching. 
In the following, we show that this condition is actually sufficient.

\begin{theorem}\label{thm:char}
  Let $\Gamma$ be a finite set of moves. 
  A regular indistinguishability relation~$\mathop{\sim}$ 
  admits a representation as a regular observation function 
  if, and only if, the information tree $\Gamma^*/_{\mathop{\sim}}$
  is of bounded branching.
\end{theorem}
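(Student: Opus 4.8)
The plan is to prove the two implications separately, dispatching necessity quickly and concentrating the effort on the converse, where a finite-alphabet coloring of the information tree has to be built and shown regular.

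For necessity, I would argue as in the proof of Lemma~\ref{lem:example}. Suppose $\mathop{\sim}$ is represented by a regular observation function $\beta$ with finite output alphabet $\Sigma$. Every node of the information tree is an information set $U_\eta$ indexed by an observation history $\eta$, and by perfect recall its children are the nonempty sets among $U_{\eta\sigma}$ for $\sigma \in \Sigma$, since appending one move appends one observation symbol. Hence each node has at most $\abs{\Sigma}$ children, and the branching is bounded.

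For sufficiency, assume $\mathop{\sim}$ is regular with branching bounded by some $d$. First I would reduce the problem: it suffices to construct a regular observation function $\beta\colon \Gamma^+ \to \Sigma$ with finite $\Sigma$ that is (i) constant on each information set and (ii) injective on the children of every node (siblings receive distinct symbols). Indeed, $\hat\beta$ records the path a history takes through the information tree; by perfect recall two related histories follow the same path and agree under $\hat\beta$, whereas two unrelated histories of equal length first split at some stage, where (i) and (ii) force differing observations. Thus any such $\beta$ satisfies $\ker\hat\beta = \mathop{\sim}$, and by Lemma~\ref{lem:correspondence-gen} it represents the same partition. To produce $\beta$, I would work with lex-least representatives: the representative map is automatic and the set $\mathrm{Rep}$ of representatives is regular (word-automatic structures). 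Order the children of a node by the lexicographic order of their representatives, and let $\beta(\tau)$ be the rank of the node $[\tau]$ in this order among the children of its (by Lemma~\ref{lem:info-tree}, unique) parent. Bounded branching places ranks in $\{1,\dots,d\}$, a finite alphabet, and distinct children have distinct representatives, hence distinct ranks, so (i) and (ii) hold.

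It then remains to check regularity, for which I would use that a function into a finite alphabet is realizable by a Mealy automaton if and only if each preimage $\beta^{-1}(j)$ is a regular language: a state need only decide, after reading a prefix, which preimage the current history lies in. The child relation on representatives, ``$[w']$ is a child of $[w]$'', is obtained from $\mathop{\sim}$ and the successor relation by existentially quantifying a history witness $\tau$ of the same length as $w$; by closure of regular synchronous relations under intersection and projection it is regular, as is the synchronous lexicographic order. The property ``$[w']$ has rank $j$'' is then first-order definable over these relations with a bounded counting quantifier (exactly $j-1$ smaller siblings), so $\mathrm{Rep}_j$ is regular, and finally $\beta^{-1}(j) = \{\tau \mid \exists w\ (\tau \sim w \wedge w \in \mathrm{Rep}_j)\}$ is a projection of a regular synchronous relation, hence regular.

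The main obstacle I expect is precisely this regularity argument, where two ingredients must cooperate. Bounded branching is exactly what keeps the number of ranks, and thus the observation alphabet, finite\,---\,without it the natural coloring needs unboundedly many symbols, as witnessed by Lemma~\ref{lem:example}. The automatic-structures closure properties, especially projecting away the unbounded-length witness $\tau$ in the child relation, are what keep the coloring regular. I anticipate that the bulk of the work lies in establishing these automaticity facts cleanly\,---\,that the representative map and the child relation on representatives are regular\,---\,rather than in the combinatorial reduction, which is routine once the coloring is in hand.
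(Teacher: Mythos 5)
Your proposal is correct and takes essentially the same route as the paper's proof: the same lexicographic-least-representative rank function as the observation coloring, the same correctness argument (perfect recall plus distinctness of sibling ranks, which the paper phrases as an induction on length), and the same regularity argument via closure properties of synchronous/automatic relations, assembling the Mealy automaton as a product of DFAs for the rank preimages. The only cosmetic differences are that you establish necessity by directly counting children of a node $U_\eta$ (as in the paper's Lemma~\ref{lem:example}) rather than invoking the Homomorphism Theorem, and you build automata for ``rank exactly $j$'' where the paper uses ``rank at least $k$''.
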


\begin{proof}
  The \emph{only-if}-direction is immediate. If 
  for an indistinguishability relation~$\mathop{\sim}$,
  there exists an
  observation function $\beta\colon \Gamma^+ \to \Sigma$ with finite range  
  (not necessarily regular)  such that 
  $\mathop{\sim} = \ker \hat{\beta}$, then the maximal degree of the
  information tree $\Gamma^*/_{\mathop{\sim}}$ is at most $\abs{\Sigma}$. 
  Indeed, the observation-history function $\hat{\beta}$
  is a strong homomorphism from the
  move tree $\Gamma^*$ to the tree of observation histories $\hat{\beta}( \Gamma^*) \subseteq \Sigma^*$:
  it maps every pair $(\tau, \tau c)$ of successive move histories to the pair of
  successive observation histories $( \hat{\beta}( \tau ), \hat{\beta}( \tau ) \beta ( \tau c ))$,
  and conversely, for every pair of successive observation histories, there exists a
  pair of successive move histories that map to it.   
  By the Homomorphism Theorem (in the general formulation of Mal'cev~\cite{Malcev1973}), 
  it follows that the information tree
  $\Gamma^*/_{\mathop{\sim}} = \Gamma^*/_{\ker \hat{\beta}}$ is isomorphic to the image $\hat{\beta}( \Gamma^*)$,
  which, as a subtree $\Sigma^*$, has degree at most $\abs{\Sigma}$. 
  
  To verify the \emph{if}-direction, consider 
  an indistinguishability relation $\mathop{\sim}$ over $\Gamma^*$, 
  given by a $\dfa$~$\calR$, such that the information tree $\Gamma^*/_{\mathop{\sim}}$ 
  has branching degree at most~$n \in \mathbb{N}$. 
  
  Let us fix an arbitrary linear ordering $\preceq$ of $\Gamma$. 
  First, we pick as a representative
  for each information set,
  its least element with respect to the lexicographical order $<_\mathrm{lex}$
  induced by $\preceq$.
  Then, we order the 
  information sets in $\Gamma^*/_{\mathop{\sim}}$ according to the 
  lexicographical order of their representatives. 
  Next, we define the
  \emph{rank} of any nonempty history $\tau c \in \Gamma^*$ to be the
  index of its information set $[\tau c]_{\mathop{\sim}}$ 
  in this order, restricted to successors of $[\tau]_{\mathop{\sim}}$ ---\,this index is bounded by $n$.
  Let us consider the observation function $\beta$ that associates to every history its rank.
  We claim that (1)~it describes the same information partition as $\mathop{\sim}$ and
  (2)~it is a regular function. 

  To prove the first claim, we show that
  whenever two histories are indistinguishable $\tau \sim \tau'$, 
  they yield the same observation sequence $\hat{\beta}( \tau ) = \hat{\beta} (\tau')$. 
  The rank of a history is determined by its information set.
  Since $\tau \sim \tau'$, every pair
  $(\rho, \rho')$ of prefix histories of the same length are also indistinguishable,
  and therefore yield the same rank $\beta( \rho ) = \beta (\rho')$. 
  By definition of $\hat{\beta}$,
  it follows that $\hat{\beta}( \tau ) = \hat{\beta}( \tau')$. 
  Conversely, to verify that $\hat{\beta}( \tau ) = \hat{\beta} (\tau')$ implies
  $\tau \sim \tau'$, we proceed by induction on the length of histories. 
  The basis concerns only the empty history and thus holds trivially.
  For the induction step, suppose $\hat{\beta}( \tau c ) = \hat{\beta} (\tau' c')$.
  By definition of~$\hat{\beta}$,
  we have in particular $\hat{\beta}( \tau ) = \hat{\beta} (\tau')$,
  which by induction hypothesis implies $\tau \sim \tau'$.
  Hence, the information sets of 
  the continuations $\tau c$ and $\tau' c'$ are
  successors of the same information set $[\tau]_{\mathop{\sim}} = [\tau']_{\mathop{\sim}}$
  in the information tree $\Gamma^*/\mathop{\sim}$. 
  As we assumed that 
  the histories $\tau c$ and $\tau' c'$ have 
  the same rank, it follows that they indeed belong to the same information set,
  that is $\tau c \sim \tau' c'$.

To verify the second claim on the regularity of the observation function~$\beta$, 
we first notice that the following languages are regular:
\begin{itemize}
\item the (synchronous) lexicographical order
  $\{ (\tau, \tau') \in (\Gamma \times \Gamma)^* ~\mid~ \tau \le_\mathrm{lex} \tau' \}$,
\item the set of representatives $\{ \tau \in \Gamma^* ~\mid~
  \tau \le_\mathrm{lex} \tau' \text{ for all } \tau' \sim \tau \}$, and
\item the representation relation
  $\{ (\tau, \tau') \in ~\mathop{\sim} ~\mid~ \tau' \text{ is a representative} \}$.
\end{itemize}
Given automata recognising these languages,
we can then construct, for each $k \le n$, an automaton~$\calA_{k}$ that recognises
the set of histories of rank at least $k$: 
together with the representative of the input history,
guess the $k-1$ representatives that are below in the lexicographical order. 
Finally, we take 
the synchronous product of the automata $\calA_1 \ldots \calA_k$
and equip it with an output function as follows:
for every transition in the product automaton  
all components of the target state, up to some index $k$, are accepting ---\,we define the output of the transition
to be just this index~$k$.
This yields a Mealy automaton that outputs the rank of the input history, as desired.  
\end{proof}

For further use, we estimate the size of the Mealy automaton defining the rank function
as outlined in the proof.
Suppose that an indistinguishability relation~$\mathop{\sim} \subseteq (\Gamma \times \Gamma)^*$
given by a two-tape \dfa~$\calR$ of size $m$
gives rise to an information tree $\Gamma^*/_{L( \calR )}$ of degree~$n$.
The lexicographical order is recognisable by a two-tape \dfa of size $O( \abs{\Gamma}^2 )$, bounded by $O( m )$;
to recognise the set of representatives we take the synchronous product of this automaton with $\calR$, 
and apply a projection and a complementation, obtaining a \dfa of size bounded by $2^{O( {m^2})} )$;
for the representation relation, we take a synchronous product of this automaton with $\calR$ and obtain
a two-tape \dfa of size still bounded by $2^{O(m^2)}$. 
For every index $k \le n$, the automaton $\calA_k$ can be constructed via projection
from a  synchronous product of $n$ such automata, hence its size is bounded by $2^{2^{O(n m^2)}}$.
The Mealy automaton for defining the rank runs all
these $n$ automata synchronously, so it is of the same order of magnitude $2^{2^{O(n m^2)}}$.

To decide whether the information tree represented by a regular indistinguishability relation
has bounded degree, we use a result from the
theory of word-automatic structures~\cite{KhoussainovNer95,BlumensathGra00}.
For the purpose of our presentation, we define an automatic presentation
of a tree $T = (V, E)$ 
as a triple $(\calA_V, \calA_=, \calA_E)$ of automata
with input alphabet $\Gamma$, 
together with a surjective naming map $h \colon L \to V$
defined on a set of words $L \subseteq \Gamma^*$ such that
\begin{itemize}
\item $L( \calA_V ) = L$,
\item $L( \calA_= ) = \ker h$, and
\item $L( \calA_E ) = \{ (u, v) \in L \times L \mid (h( u ), h( v )) \in E \}$. 
\end{itemize}
In this case, $h$ is an isomorphism between $T = (V, E)$ and the
quotient $(L, L( \calA_E )) /_{L( \calA_=)}$.
The size of such an automatic presentation is the added size
of the three component automata.
A tree is automatic if it has an automatic presentation.

For an information partition given by a 
indistinguishability relation~$\mathop\sim$  
 defined by a two-tape-\dfa~$\calR$ on a move alphabet~$\Gamma$,
 the information tree $\Gamma^*/_{\mathop{\sim}}$ admits an automatic presentation
 with the naming map that sends every history $\tau$ to its
 information set $[\tau]_{\mathop{\sim}}$, and
 \begin{itemize}
 \item
   as domain automaton~$A_V$, the one-state automaton accepting all of $\Gamma^*$ (of size $\Gamma$);
 \item
   as the equality automaton $\calA_=$, the two-tape~\dfa $\calR$, and
 \item
   for the edge relation, a two-tape \dfa $\calA_E$ that recognises the relation
   \begin{align*}
     \{(\tau, \tau' c) \in \Gamma^* \times \Gamma^*~\mid~(\tau, \tau') \in L( \calR )\}.
   \end{align*}
 \end{itemize}
 The latter automaton is obtained from $\calR$ by adding transitions
 from each accepting state,  
 with any move symbol on the first tape
 and the padding symbol on the second tape, to
 a unique fresh accepting state from which all outgoing transitions
 lead to the rejecting sink~$\reject$.
 Overall, the size of the presentation will thus be bounded by $O( \abs{\calR})$. 

 Now, we can apply the following result of Kuske and Lohrey.

\begin{proposition}(\cite[Propositions 2.14--2.15]{KuskeLoh11})\label{prop:KuskeLoh}
 The question whether an automatic structure has 
bounded degree is decidable in exponential time.
If the degree of an automatic structure is bounded, 
then it is bounded by $2^{2^{{m}^{O(1)}}}$ in the size $m$ of the presentation.
\end{proposition}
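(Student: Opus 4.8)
The plan is to reduce bounded degree to finiteness and bounded-ambiguity questions about effectively constructible automata, and to read off the size bound from the blow-ups these constructions incur. Write $(\calA_V,\calA_=,\calA_E)$ for the presentation, $L = L(\calA_V)$ for the domain language, and $h\colon L\to V$ for the naming map; the out-degree of a vertex $h(u)$ is the number of $L(\calA_=)$-classes meeting $\{v\in L:(u,v)\in L(\calA_E)\}$. Counting classes rather than words is the conceptual crux, so I would first make neighbours canonical: take in each class its length-lexicographically least element, obtain a \dfa for the regular set $\mathrm{Rep}\subseteq L$ of representatives by the usual machinery (a synchronous automaton for $\le_\mathrm{lex}$, a product with $\calA_=$, then projection and complementation), and compose with $\calA_E$ to get a synchronous automaton $\calA_{E'}$ for
\[
  E' = \{(u,v)\in L\times\mathrm{Rep} : (h(u),h(v))\in E\}.
\]
As distinct representatives name distinct vertices, the out-degree of $h(u)$ is now the plain cardinality $\abs{\{v:(u,v)\in E'\}}$, and the task becomes to decide whether this cardinality is bounded uniformly in $u$.

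Next I would analyse $\calA_{E'}$ by length regions. Reading $(u,v)$ synchronously, once the shorter word is exhausted the automaton runs a fixed one-tape padding sub-automaton on the suffix of the longer word: for $\abs{v}>\abs{u}$ this is an automaton $B_q$ started from the state $q$ reached after $u$, and symmetrically for $\abs{v}<\abs{u}$. The key observation is that if some reachable $B_q$, intersected with $\mathrm{Rep}$, accepts a word longer than its number of states, then a pumping cycle yields infinitely many pairwise distinct representative neighbours of the single vertex $h(u)$, so its degree is already infinite\,---\,\emph{a fortiori} unbounded. Hence bounded degree forces every such $B_q\cap\mathrm{Rep}$ to be finite, which in turn bounds the length of every representative neighbour by $\abs{u}+c$ with $c$ polynomial in the state count of $\calA_{E'}$; and finiteness of each $B_q\cap\mathrm{Rep}$ is the decidable condition that no accepting run passes through a cycle.

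It then remains to bound, and to decide boundedness of, the number of representatives $v$ with $\abs{v}\le\abs{u}+c$ and $(u,v)\in E'$ as $u$ ranges over $L$. Viewing $\calA_{E'}$ as a transducer that reads $u$ and emits the compatible $v$, this is precisely the question whether its valuedness is bounded: the only way the count can grow with $\abs{u}$ is a pumpable configuration along the first tape that multiplies the number of compatible continuations, which I would detect by a finite-valuedness test on a product of $\calA_{E'}$ with itself that tracks pairs of witnesses that are \emph{inequivalent} under $\calA_=$. Decidability of finite valuedness together with the finiteness tests above yields the decision procedure. Tracing the constructions\,---\,one exponential for the length-lex and representative automata (projection plus complementation), another for the edge products, and a pumping length for the valuedness witnesses\,---\,places the decision in exponential time and, when the degree is finite, bounds it by the number of short representatives, that is by $2^{2^{m^{O(1)}}}$.

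The hardest part is genuinely the passage from words to vertices: a naive count of accepting runs of $\calA_{E'}$ overcounts, and it is the representative construction, together with the pumping principle that distinct representatives along a cycle force infinite degree, that both justifies the uniform length bound and supplies the outer exponential of $2^{2^{m^{O(1)}}}$. The secondary difficulty is the finite-valuedness test, where care is needed to count witnesses that are pairwise inequivalent under $\calA_=$ rather than merely distinct runs of the underlying automaton.
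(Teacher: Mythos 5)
The paper does not prove this proposition at all: it is imported verbatim from Kuske--Lohrey \cite[Propositions 2.14--2.15]{KuskeLoh11}, and the surrounding text only explains how to package an information tree as an automatic presentation so that the citation applies. So there is no in-paper proof to compare against; what you have written is a blind reconstruction of the cited result itself.

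As a reconstruction, your sketch is sound and has the right architecture, matching the way this result is standardly obtained. The three load-bearing ideas are all present and correctly ordered: (i) canonicalisation by length-lexicographically least representatives, which costs one exponential (projection of the $\le_{\mathrm{lex}}$-product with $\calA_=$ gives an NFA, and the complementation to get $\mathrm{Rep}$ forces determinization) and turns class-counting into word-counting; (ii) the pumping argument in the padding region, which shows that an accepting cycle there gives a single vertex infinitely many representative neighbours, so bounded degree forces every witness $v$ to satisfy $\abs{v}\le\abs{u}+c$ with $c$ at most the state count of $\calA_{E'}$; (iii) the reduction of \emph{uniform} boundedness over $u$ to finite valuedness of $\calA_{E'}$ read as a transducer, which is decidable in time polynomial in $\abs{\calA_{E'}}$ and, when finite, is at most exponential in $\abs{\calA_{E'}}$ --- this is Weber's theorem, the very result the paper itself invokes in the proof of Lemma~\ref{lem:clique-size}. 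Composing the two exponentials gives exactly the claimed exponential-time decision and the $2^{2^{m^{O(1)}}}$ degree bound. Two small corrections: first, your worry about counting witnesses ``inequivalent under $\calA_=$'' in the valuedness test is superfluous --- since the second tape of $E'$ is already restricted to $\mathrm{Rep}$, distinct output words \emph{are} distinct vertices, so plain finite valuedness is exactly the right notion and no enriched product is needed; second, degree in an automatic structure is usually Gaifman-graph degree, so in-neighbours must be treated symmetrically by swapping the tapes of $\calA_E$ (harmless, but worth stating; for the paper's application to information trees only out-degree, i.e.\ branching, matters).
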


This allows to conclude that the criterion of Theorem \ref{thm:char} characterising
regular indistinguishability relations that are representable by regular observation functions 
is effectively decidable.
By following the construction for the rank function outlined in the proof
of the theorem, 
we obtain a fourfold exponential upper bound for the size of
a Mealy automaton defining an observation function.

\begin{theorem}\label{thm:decision}
\begin{enumerate}[(i)]
\item The question whether an  
indistinguishability relation given as a two-tape $\dfa$ 
admits a representation as a 
regular observation function 
is decidable in exponential time (with respect to the size of the \dfa).
\item Whenever this is the case, we can construct a
  Mealy automaton of fourfold-exponential size and with at most doubly exponentially many
  output symbols 
  that defines a corresponding observation function.
\end{enumerate} 
\end{theorem}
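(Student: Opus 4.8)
The plan is to assemble part~(i) from the characterisation of Theorem~\ref{thm:char} together with the automatic presentation of the information tree and the decision procedure of Proposition~\ref{prop:KuskeLoh}, and then to obtain part~(ii) by feeding the \emph{quantitative} degree bound supplied by Proposition~\ref{prop:KuskeLoh} into the rank construction from the proof of Theorem~\ref{thm:char}.

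For part~(i), given a two-tape \dfa~$\calR$ of size~$m$, I would first verify in polynomial (cubic) time, as in the earlier decidability lemma, that $L(\calR)$ is a genuine indistinguishability relation; this preprocessing does not affect the exponential bound. By Theorem~\ref{thm:char}, it then remains to decide whether the information tree $\Gamma^*/_{L(\calR)}$ has bounded branching. As constructed just above, this tree admits an automatic presentation of size $O(\abs{\calR}) = O(m)$, so Proposition~\ref{prop:KuskeLoh} decides bounded degree in time exponential in~$m$. This settles decidability within the claimed bound.

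For part~(ii), the crucial point is that Proposition~\ref{prop:KuskeLoh} is effective: whenever the degree is bounded, it is bounded by some $n = 2^{2^{m^{O(1)}}}$, that is, doubly exponential in~$m$. I would use this value of~$n$ as the branching bound in the construction from the proof of Theorem~\ref{thm:char}; even if the true degree is smaller, an over-estimate only makes some of the automata $\calA_k$ recognise the empty language, so the construction and its correctness argument go through unchanged. The resulting Mealy automaton outputs ranks in the range $\{1, \ldots, n\}$, giving at most doubly exponentially many output symbols. For its size I would reuse the estimate recorded after Theorem~\ref{thm:char}, namely $2^{2^{O(n m^2)}}$, and substitute the doubly-exponential value of~$n$: the factor $m^2$ is absorbed, so $n m^2 = 2^{2^{m^{O(1)}}}$, the inner exponentiation turns $2^{O(n m^2)}$ into a triple exponential, and the outer exponentiation yields the fourfold-exponential bound $2^{2^{2^{2^{m^{O(1)}}}}}$, as claimed.

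The conceptual content here is light, since everything reduces to combining results already in hand; accordingly, the main point of care is the arithmetic of the exponential tower and, more importantly, the observation that Proposition~\ref{prop:KuskeLoh} must be invoked in its quantitative form. Decidability alone would not suffice to run the rank construction, which needs a concrete upper bound~$n$ on the branching in order to know how many automata $\calA_k$ to build; it is precisely the effective double-exponential degree bound that both certifies termination of the construction and drives the final size estimate.
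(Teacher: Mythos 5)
Your proposal is correct and follows essentially the same route as the paper: decidability via the automatic presentation of the information tree of size $O(m)$ combined with Proposition~\ref{prop:KuskeLoh}, and the size bound by substituting the doubly-exponential degree bound $n = 2^{2^{m^{O(1)}}}$ into the estimate $2^{2^{O(nm^2)}}$ for the rank construction of Theorem~\ref{thm:char}. Your added remarks\,---\,that the quantitative form of Proposition~\ref{prop:KuskeLoh} is indispensable, and that over-estimating $n$ is harmless\,---\,are sound refinements of details the paper leaves implicit.
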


\section{Improving the Construction of Observation Automata}

Theorem~\ref{thm:decision} establishes only a crude
upper bound on the size of a Mealy automaton
corresponding to a given indistinguishability \dfa.
In this section, we present a more detailed analysis
that allows to improve the construction by one exponential.

Firstly, let us point out that an exponential blowup is generally unavoidable,
for the size of the automaton 
and for its observation alphabet.

\begin{figure}[!tb]
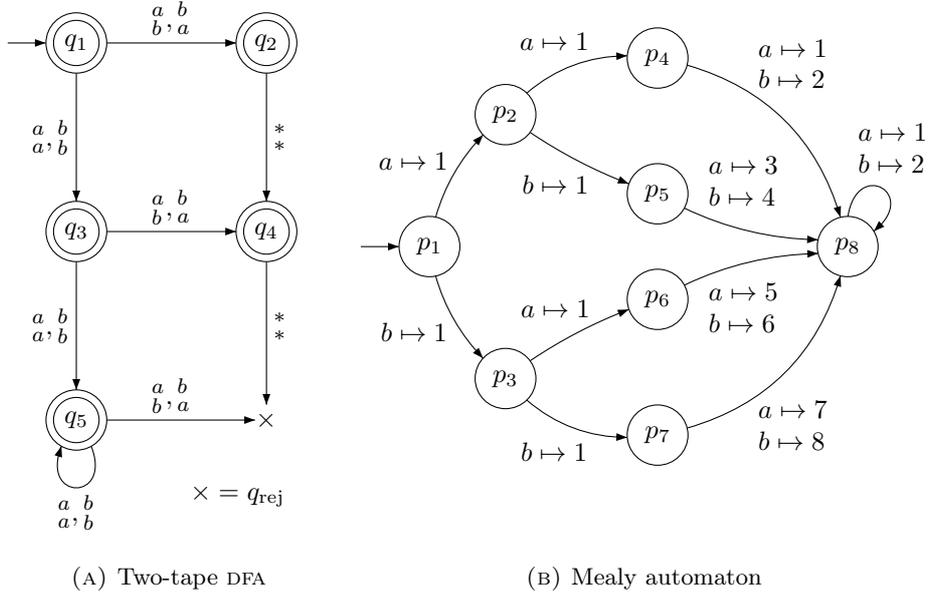

  \centering
  \begin{subfigure}[b]{0.35\textwidth}

\begin{gpicture}(42,75)(0,0)
\gasset{Nw=8,Nh=8,Nmr=4, rdist=1, loopdiam=5}    

\node[Nmarks=ir, Nframe=y](q1)(10,67){$q_1$}
\node[Nmarks=r, Nframe=y](q2)(35,67){$q_2$}
\node[Nmarks=r, Nframe=y](q3)(10,42){$q_3$}
\node[Nmarks=r, Nframe=y](q4)(35,42){$q_4$}
\node[Nmarks=r, Nframe=y](q5)(10,17){$q_5$}
\node[Nmarks=n, Nframe=n, Nw=3, Nh=4](rej)(35,17){$\times$}
\put(25,7){\makebox(0,0)[l]{$\times = \reject$}}

\drawedge[ELpos=50, ELside=l, curvedepth=0](q1,q2){$\fbi{a}{b},\fbi{b}{a}$}
\drawedge[ELpos=50, ELside=r, ELdist=1, curvedepth=0](q1,q3){$\fbi{a}{a},\fbi{b}{b}$}
\drawedge[ELpos=50, ELside=l, ELdist=1, curvedepth=0](q2,q4){$\fbi{*}{*}$}
\drawedge[ELpos=50, ELside=l, curvedepth=0](q3,q4){$\fbi{a}{b},\fbi{b}{a}$}
\drawedge[ELpos=50, ELside=r, ELdist=1, curvedepth=0](q3,q5){$\fbi{a}{a},\fbi{b}{b}$}
\drawedge[ELpos=50, ELside=l, ELdist=1, curvedepth=0](q4,rej){$\fbi{*}{*}$}
\drawedge[ELpos=50, ELside=l, curvedepth=0](q5,rej){$\fbi{a}{b},\fbi{b}{a}$}

\drawloop[ELside=l, ELdist=1, loopCW=y, loopdiam=5, loopangle=270](q5){$\fbi{a}{a},\fbi{b}{b}$}

\end{gpicture}
    \caption{Two-tape \dfa}
    \label{fig:exponential-a}
  \end{subfigure}
 \begin{subfigure}[b]{0.62\textwidth}

    \begin{gpicture}(77,75)(-1,0)
      \gasset{Nw=8,Nh=8,Nmr=4, rdist=1, loopdiam=5}

      \node[Nmarks=i, Nframe=y](q1)(10,40){$p_1$}

      \node[Nmarks=n, Nframe=y](q2)(20,57.5){$p_2$}
      \node[Nmarks=n, Nframe=y](q3)(20,22.5){$p_3$}

      \node[Nmarks=n, Nframe=y](q4)(40,65){$p_4$}
      \node[Nmarks=n, Nframe=y](q5)(40,47){$p_5$}
      \node[Nmarks=n, Nframe=y](q6)(40,33){$p_6$}
      \node[Nmarks=n, Nframe=y](q7)(40,15){$p_7$}

      \node[Nmarks=n, Nframe=y](q8)(65,40){$p_8$}

      \drawedge[ELpos=45, ELside=l, curvedepth=2](q1,q2){$a \mapsto 1$}
      \drawedge[ELpos=45, ELside=r, curvedepth=-2](q1,q3){$b \mapsto 1$}

      \drawedge[ELpos=45, ELside=l, curvedepth=3](q2,q4){$a \mapsto 1$}
      \drawedge[ELpos=45, ELside=r, curvedepth=-1](q2,q5){$b \mapsto 1$}
      \drawedge[ELpos=45, ELside=l, curvedepth=1](q3,q6){$a \mapsto 1$}
      \drawedge[ELpos=45, ELside=r, curvedepth=-3](q3,q7){$b \mapsto 1$}

      \drawedge[ELpos=40, ELdist=-1, ELside=l, curvedepth=6](q4,q8){$\begin{array}{l}a \mapsto 1 \\b \mapsto 2\end{array} $}

      \drawedge[ELpos=40, ELdist=0, ELside=l, curvedepth=-2, eyo=1](q5,q8){$\begin{array}{l}a \mapsto 3 \\b \mapsto 4\end{array} $}

      \drawedge[ELpos=40, ELdist=0, ELside=r, curvedepth=2, eyo=-1](q6,q8){$\begin{array}{l}a \mapsto 5 \\b \mapsto 6\end{array} $}

      \drawedge[ELpos=40, ELdist=-1, ELside=r, curvedepth=-6](q7,q8){$\begin{array}{l}a \mapsto 7 \\b \mapsto 8\end{array} $}

      \drawloop[ELpos=47, ELside=l, ELdist=-1, loopCW=y, loopdiam=5, loopangle=60](q8){$\begin{array}{l}a \mapsto 1 \\b \mapsto 2\end{array} $}

    \end{gpicture}
    \caption{Mealy automaton}   
    \label{fig:exponential-t}
  \end{subfigure}
  \caption{A synchronous two-tape automaton 
with $2k$ states (here $k=3$) for
which an equivalent observation \dfal requires exponential number of states ($2^k$)\label{fig:exponential}}
\end{figure}

\begin{example}

Figure~\ref{fig:exponential-a} shows a two-tape \dfa
that compares histories over a move alphabet $\{a,b\}$ with
an embargo period of length $k$. 
Every pair of histories of length less than~$k$ is accepted, whereas 
history pairs of length $k$ and onwards are rejected if, and only if,
they are different (the picture illustrates the case for $k=3$).
A~Mealy automaton that describes this indistinguishability relation 
needs to produce, for every different prefix of length $k$, a different observation
symbol.
To do so, it has to store the first $k$ symbols, which 
requires~$2^k$ states and~$2^k$ observation symbols
(see \figurename~\ref{fig:exponential-t}).
\end{example}

\begin{longversion}

  We will first identify some structural properties of indistinguishability
  relations and their \dfa, and then present the concrete construction.

\end{longversion}

\subsection{Structural properties of regular indistinguishability
relations}

For the following, let us fix a move alphabet~$\Gamma$ and
a two-tape \dfa 
$\calR = \tuple{Q,\Gamma \times \Gamma,q_\init,\delta,F}$
defining an indistinguishability
relation~$L( \calR ) = \mathop\sim$.
For convenience, we will usually write $\delta(q,\tbi{\tau\phantom{'}}{\tau'})$ for
$\delta(q,(\tau,\tau'))$. 

We assume that the automaton~$\calR$ is minimal, in the usual sense that all states are reachable
from the initial state, and the languages accepted from two different states
are different.
Note that, due to the property that whenever two
histories are distinguishable, 
their continuations are also distinguishable, 
minimality of $\calR$ also implies that
all its states are accepting, except for the single sink state
$\reject$, that is, $F = Q \setminus \{\reject\}$.

First, we classify the states according to the behaviour of the automaton 
when reading the same input words on both tapes. On the one hand,
we consider the states reachable from the initial state
on such inputs, which we call \emph{reflexive} states:
\begin{align*}
\Reflexive = \{q \in Q \mid \exists \tau \in \Gamma^*: \delta(q_\init,\tbi{\tau}{\tau}) = q\}.
\end{align*}
On the other hand, we consider the states from which it is 
possible to reach the rejecting sink
by reading the same input word on both tapes, which we call \emph{ambiguous} states,
\begin{align*}
 \Ambiguous = \{q \in Q \mid \exists \tau \in \Gamma^*: \delta(q,\tbi{\tau}{\tau}) = \reject\}.
\end{align*}
For instance, in the running example of \figurename~\ref{fig:geom}, the reflexive
states are $\Reflexive = \{q_1,q_2\}$ and
the ambiguous states are $\Ambiguous = \{q_3,q_4,\reject\}$. 

Since indistinguishability relations are reflexive,
all the reflexive states are accepting
and by reading any pair of identical words from a reflexive state, we always reach an accepting state.
Therefore, a reflexive state cannot be ambiguous. Perhaps less obviously,  
the converse also holds: a non-reflexive state must be ambiguous.

\begin{lemma}[Partition Lemma]\label{lem:partition}
$Q \setminus \Reflexive = \Ambiguous$.
\end{lemma}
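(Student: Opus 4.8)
The inclusion $\Ambiguous \subseteq Q \setminus \Reflexive$ is already granted by the preceding observation that no reflexive state can be ambiguous, so the plan is to establish the reverse inclusion $Q \setminus \Reflexive \subseteq \Ambiguous$, and I would do this by proving its contrapositive: every state $q \notin \Ambiguous$ is reflexive. Unfolding the definition, $q \notin \Ambiguous$ means that $\delta(q, \tbi{\tau}{\tau}) \neq \reject$ for \emph{every} $\tau \in \Gamma^*$; since $\calR$ is minimal and hence $F = Q \setminus \{\reject\}$, this says precisely that reading any diagonal word $\tbi{\tau}{\tau}$ from $q$ keeps the automaton in an accepting state.

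Next I would use minimality (all states are reachable) to fix a pair $(\sigma, \sigma')$ of equal-length histories with $\delta(q_\init, \tbi{\sigma}{\sigma'}) = q$, and then compare $q$ with the state $r := \delta(q_\init, \tbi{\sigma}{\sigma})$, which is reflexive by definition. The goal is to show that $q$ and $r$ accept the same residual relation, so that minimality (distinct states accept distinct languages) forces $q = r \in \Reflexive$. Writing $L_q$ and $L_r$ for the relations accepted from $q$ and from $r$, the transition equations give, for equal-length $\mu, \mu'$, the characterisations $(\mu, \mu') \in L_q \iff \sigma\mu \sim \sigma'\mu'$ and $(\mu, \mu') \in L_r \iff \sigma\mu \sim \sigma\mu'$.

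It then remains to verify $L_q = L_r$ using that $\sim$ is an equivalence relation. The assumption $q \notin \Ambiguous$ yields $\sigma\tau \sim \sigma'\tau$ for every $\tau$, and in particular $\sigma\mu' \sim \sigma'\mu'$. Hence if $\sigma\mu \sim \sigma'\mu'$, then by symmetry and transitivity through $\sigma'\mu'$ we also get $\sigma\mu \sim \sigma\mu'$; conversely, if $\sigma\mu \sim \sigma\mu'$, then transitivity with $\sigma\mu' \sim \sigma'\mu'$ gives $\sigma\mu \sim \sigma'\mu'$. This establishes $L_q = L_r$, whence $q = r \in \Reflexive$, as required. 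The only genuine insight is the choice of the reflexive comparison state $r = \delta(q_\init, \tbi{\sigma}{\sigma})$ together with the realisation that the full strength of non-ambiguity---the diagonal condition holding for \emph{all} $\tau$, not merely for $\tau = \init$---is exactly what allows one to slide from the second tape to the first; everything else is routine equivalence-relation bookkeeping followed by an appeal to minimality.
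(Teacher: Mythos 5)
Your proof is correct and is essentially the paper's argument read in contrapositive: the paper takes a non-reflexive $q$ reached via $\tbi{\sigma\phantom{'}}{\sigma'}$, compares it with the diagonal state $\delta(q_\init,\tbi{\sigma}{\sigma})$, and uses minimality (distinct states accept distinct languages) together with transitivity of $\sim$ to extract a diagonal word sending $q$ to $\reject$, whereas you assume $q \notin \Ambiguous$ and use the very same comparison state, the same minimality property, and the same transitivity bookkeeping to conclude $q = \delta(q_\init,\tbi{\sigma}{\sigma}) \in \Reflexive$. The two proofs rest on identical ingredients, so the difference is purely one of presentation.
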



\begin{longversion}

  \renewcommand\windowpagestuff{
    \begin{gpicture}(35,35)(0,0)
      \gasset{Nw=8,Nh=8,Nmr=4, rdist=1, loopdiam=5}   
      \node[Nmarks=n, Nframe=n, Nw=6, Nh=6](qI)(7,30){$q_\init$}
      \node[Nmarks=n, Nframe=n, Nw=6, Nh=6](q)(32,30){$q$}
      \node[Nmarks=n, Nframe=n, Nw=6, Nh=6](qv)(32,10){$q_{\tau}$}
      
      \drawedge[ELpos=50, ELside=l, curvedepth=0](qI,q){$\fbi{\tau\phantom{'}}{\tau'}$}
      \drawedge[ELpos=55, ELside=r, curvedepth=-4](qI,qv){$\fbi{\tau}{\tau}$}
    \end{gpicture}
  }

\begin{proof}
The inclusion $\Ambiguous \subseteq Q \setminus \Reflexive$
(or, equivalently, that $\Ambiguous$ and $\Reflexive$ are disjoint)
follows from the definitions and the fact that $\sim$ is a reflexive relation,
and thus $\delta(q_\init,\tbi{\tau}{\tau}) \neq \reject$ for all histories $\tau$.
\mbox{}\\

\opencutright
\begin{cutout}{1}{.7\textwidth}{0pt}{8}
  To show that $Q \setminus \Reflexive \subseteq \Ambiguous$,
  let us consider an arbitrary state $q \in Q \setminus \Reflexive$.
  By minimality of~$\calR$, the state $q$ is reachable from $q_\init$:
  there exist histories $\tau, \tau'$ such that $\delta(q_\init,\tbi{\tau\phantom{'}}{\tau'}) = q$.
  Let $q_{\tau} = \delta(q_\init,\tbi{\tau}{\tau})$ 
  be the state reached after reading $\tbi{\tau}{\tau}$ (see figure).
  Thus, $q_{\tau} \in \Reflexive$ and in particular $q_{\tau} \neq q$.
  Again by minimality of~$\calR$, the languages accepted from $q$ and $q_{\tau}$ are different.
  Hence, there
  exist histories $\pi,\pi'$ such that 
  $\tbi{\pi\phantom{'}}{\pi'}$ is accepted from $q$ and rejected from~$q_{\tau}$,
  or the other way round.
  In the former case, we have that $\tau \pi \sim \tau' \pi'$ and $\tau \pi \not\sim \tau \pi'$, which
  by transitivity of $\sim$, implies $\tau \pi' \not\sim \tau' \pi'$.
  This means that from state $q$ reading $\tbi{\pi'}{\pi'}$ leads to $\reject$,
  showing that $q \in \Ambiguous$, which we wanted to prove.
  In the latter case, the argument is analogous.
\end{cutout}
\end{proof}
\end{longversion}

\begin{shortversion}
\begin{proof}
The inclusion $\Ambiguous \subseteq Q \setminus \Reflexive$
(or, equivalently, that $\Ambiguous$ and $\Reflexive$ are disjoint)
follows from the definitions and the fact that $\sim$ is a reflexive relation,
and thus $\delta(q_\init,\tbi{\tau}{\tau}) \neq \reject$ for all histories $\tau$.
\mbox{}\\

  To show that $Q \setminus \Reflexive \subseteq \Ambiguous$,
  let us consider an arbitrary state $q \in Q \setminus \Reflexive$.
  By minimality of~$\calR$, the state $q$ is reachable from $q_\init$:
  there exist histories $\tau, \tau'$ such that $\delta(q_\init,\tbi{\tau\phantom{'}}{\tau'}) = q$.
  Let $q_{\tau} = \delta(q_\init,\tbi{\tau}{\tau})$ 
  be the state reached after reading $\tbi{\tau}{\tau}$ (see figure).
  Thus, $q_{\tau} \in \Reflexive$ and in particular $q_{\tau} \neq q$.
  Again by minimality of~$\calR$, the languages accepted from $q$ and $q_{\tau}$ are different.
  Hence, there
  exist histories $\pi,\pi'$ such that 
  $\tbi{\pi\phantom{'}}{\pi'}$ is accepted from $q$ and rejected from~$q_{\tau}$,
  or the other way round.
  In the former case, we have that $\tau \pi \sim \tau' \pi'$ and $\tau \pi \not\sim \tau \pi'$, which
  by transitivity of $\sim$, implies $\tau \pi' \not\sim \tau' \pi'$.
  This means that from state $q$ reading $\tbi{\pi'}{\pi'}$ leads to~$\reject$,
  showing that $q \in \Ambiguous$, which we wanted to prove.
  In the latter case, the argument is analogous.
\end{proof}
\end{shortversion}

We say that a pair of histories accepted by $\calR$ is \emph{ambiguous},
if, upon reading them, the automaton $\calR$ reaches an ambiguous state other than $\reject$.
Histories $\tau,\tau'$ that form an ambiguous pair are
thus indistinguishable, so they must map to the same observation. 
However, there exists a suffix $\pi$ such that 
the extensions $\tau\cdot\pi$ and $\tau'\cdot\pi$ become distinguishable. 
Therefore, any observation automaton
for~$\calR$ has to reach two different states after reading $\tau$
and $\tau'$ since otherwise, the extensions by the suffix $\pi$ would produce
the same observation sequence, making $\tau\cdot\pi$ and $\tau'\cdot\pi$ wrongly 
indistinguishable.
The argument generalises immediately to collections of more than two histories.
We call a set of histories
that are pairwise ambiguous an \emph{ambiguous clique}.

We shall see later, in the proof of Lemma~\ref{lem:clique-size},
that if the size of ambiguous cliques is unbounded,
then the information tree $\Gamma^*/_{L( \calR )}$ has unbounded branching, and therefore 
there exists no Mealy automaton corresponding to $\calR$.
%
%
Now, we show conversely that whenever the size of the ambiguous cliques is bounded, 
we can construct such a Mealy automaton.

We say that two histories $\tau,\tau' \in \Gamma^*$ of the same length
are \emph{interchangeable}, denoted by 
$\tau \approx \tau'$, if
$\delta(q_\init,\tbi{\tau\phantom{'}}{\pi}) = \delta(q_\init,\tbi{\tau'}{\pi\phantom{'}})$,
 for all $\pi \in \Gamma^*$.
 Note that $\approx$ is an equivalence relation
 and that $\tau \approx \tau'$ implies $\delta(q_\init,\tbi{\tau\phantom{'}}{\tau'}) \in \Reflexive$.
The converse also holds.

\begin{samepage} 
\begin{lemma}\label{lem:reflexive-strong}
  For all histories $\tau,\tau' \in \Gamma^*$, 
  we have $\delta(q_\init,\tbi{\tau\phantom{'}}{\tau'}) \in \Reflexive$ if, and only if, $\tau \approx \tau'$.
\end{lemma}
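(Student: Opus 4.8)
The forward implication is the one already recorded just before the statement: if $\tau \approx \tau'$, then instantiating the defining equation of interchangeability at $\pi = \tau'$ gives $\delta(q_\init,\tbi{\tau}{\tau'}) = \delta(q_\init,\tbi{\tau'}{\tau'})$, and the right-hand side is reflexive by definition. So the plan is to concentrate entirely on the converse: assuming $q := \delta(q_\init,\tbi{\tau}{\tau'}) \in \Reflexive$, derive $\tau \approx \tau'$.

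First I would exploit the Partition Lemma (Lemma~\ref{lem:partition}) to convert the reflexivity of $q$ into a concrete indistinguishability fact. Since $q \in \Reflexive$ and $\Reflexive$ is disjoint from $\Ambiguous$, the state $q$ is not ambiguous; hence reading any identical pair $\tbi{\sigma}{\sigma}$ from $q$ never leads to the sink $\reject$. Because every state other than $\reject$ is accepting ($F = Q \setminus \{\reject\}$), this means $\delta(q,\tbi{\sigma}{\sigma}) \in F$, that is $\tau\sigma \sim \tau'\sigma$ for every suffix $\sigma \in \Gamma^*$. This is the key consequence I would extract: $\tau$ and $\tau'$ remain indistinguishable under every common extension.

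The second, and main, step is to promote this common-extension indistinguishability to full interchangeability. Here I would invoke minimality of $\calR$: two states coincide exactly when they accept the same language, so it suffices to prove, for each tape-2 word $\pi$ of length $\abs{\tau}$, that $\delta(q_\init,\tbi{\tau}{\pi})$ and $\delta(q_\init,\tbi{\tau'}{\pi})$ accept the same pairs. Unfolding acceptance, a pair $\tbi{\rho}{\rho'}$ is accepted from $\delta(q_\init,\tbi{\tau}{\pi})$ iff $\tau\rho \sim \pi\rho'$, and from $\delta(q_\init,\tbi{\tau'}{\pi})$ iff $\tau'\rho \sim \pi\rho'$. Equality of the two languages thus reduces to the equivalence $\tau\rho \sim \pi\rho' \iff \tau'\rho \sim \pi\rho'$, which I obtain from the special case $\tau\rho \sim \tau'\rho$ (take $\sigma = \rho$ in the previous step) together with symmetry and transitivity of $\sim$.

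The part that needs care is recognising that the state-level condition defining $\approx$ is equivalent to this purely relational statement, and that the non-ambiguity supplied by the Partition Lemma delivers exactly the ingredient $\tau\rho \sim \tau'\rho$ that transitivity consumes. Once that correspondence is in place, the argument is a short chase through the equivalence axioms; no induction or automaton surgery is needed. I expect no genuine obstacle beyond bookkeeping the lengths — all of $\tau, \tau', \pi$ share one length and $\abs{\rho} = \abs{\rho'}$ — so that every transition $\delta(q_\init, \cdot)$ is well defined.
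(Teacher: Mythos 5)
Your proposal is correct and follows essentially the same route as the paper's proof: the forward direction by instantiating $\pi = \tau'$, and the converse by combining the Partition Lemma (to get $\tau\sigma \sim \tau'\sigma$ for every common suffix $\sigma$) with transitivity of $\sim$ and minimality of $\calR$ to force the states $\delta(q_\init,\tbi{\tau\phantom{'}}{\pi})$ and $\delta(q_\init,\tbi{\tau'}{\pi\phantom{'}})$ to accept the same language and hence coincide. The only difference is presentational: you phrase the language equality as a single biconditional chase through the equivalence axioms, whereas the paper proves one inclusion and appeals to symmetry for the other.
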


\begin{proof}
  One direction,
  that $\tau \approx \tau'$ implies $\delta(q_\init,\tbi{\tau\phantom{'}}{\tau'}) \in \Reflexive$),
  follows immediately from the definitions (take $\pi = \tau'$ in the definition of interchangeable histories).

For the reverse direction, let us suppose that $\delta(q_\init,\tbi{\tau\phantom{'}}{\tau'}) \in \Reflexive$.
We will show that, for all histories $\tau''$, the states 
$q_1 = \delta(q_\init,\tbi{\tau\phantom{''}}{\tau''})$ and
$q_2 = \delta(q_\init,\tbi{\tau'\phantom{'}}{\tau''})$ accept the same language.
Towards this, let $\pi_1$, $\pi_2$ be an arbitrary pair of histories such that
$\tbi{\pi_1}{\pi_2}$ is accepted from $q_1$. 
Then, 
\begin{itemize}
\item $\tau \pi_1 \sim \tau' \pi_1$, because $\delta(q_\init,\tbi{\tau\phantom{'}}{\tau'}) \in \Reflexive$,
and from a reflexive state reading $\tbi{\pi_1}{\pi_1}$ does not lead to $\reject$ (by Lemma~\ref{lem:partition}).
\item $\tau \pi_1 \sim \tau'' \pi_2$, because $\delta(q_\init,\tbi{\tau\phantom{''}}{\tau''}) = q_1$ and 
$\tbi{\pi_1}{\pi_2}$ is accepted from $q_1$.
\end{itemize}
By transitivity of $\sim$, it follows that $\tau' \pi_1 \sim \tau'' \pi_2$,
hence $\tbi{\pi_1}{\pi_2}$ is accepted from $q_2 = \delta(q_\init,\tbi{\tau'\phantom{'}}{\tau''})$.
Accordingly, the language accepted from $q_1$ is included in the language accepted from $q_2$;
the converse inclusion holds by a symmetric argument.
Since the states $q_1$ and $q_2$ accept the same languages, and
because the automaton~$\calR$ is minimal, it follows that $q_1 = q_2$,
which means that~$\tau$ and $\tau'$ are interchangeable. 
\end{proof}
\end{samepage}

According to Lemma~\ref{lem:reflexive-strong} and because $\reject \not\in \Reflexive$,
all pairs of interchangeable histories are also indistinguishable.
 In other words, the interchangeability relation~$\mathop{\approx}$
 refines the indistinguishability relation~$\mathop{\sim}$, and 
 thus $[\tau]_{\approx} \subseteq [\tau]_{\sim}$ for all histories $\tau \in \Gamma^*$.
In the running example (\figurename~\ref{fig:geom}), 
the sets $\{aa,ab,bb \}$ and $\{ba\}$ are $\sim$-equivalence classes, and
the sets $\{aa,bb\}$, $\{ab\}$, and $\{ba\}$ are $\approx$-equivalence classes.

Let us lift the lexicographical order $\leq_\lex$ to sets of histories of the same length by
comparing the smallest word of each set: we write $S \leq S'$ if $\min S \leq_\lex \min S'$.
This allows us to rank the $\approx$-equivalence classes contained
in a $\sim$-equivalence class, in increasing order.
In the running example, if we consider the $\sim$-equivalence class $\{aa,ab,bb \}$,
$\{aa,bb\}$ gets rank $1$, and $\{ab\}$ gets rank $2$ because 
$\{aa,bb\} \leq \{ab\}$. On the other hand, the $\sim$-equivalence class $\{ba\}$, as a singleton,
gets rank $1$.

Now, we denote by $\index(\tau)$ the rank of the $\approx$-equivalence class containing $\tau$.
For example, $\index(bb) = 1$ and $\index(ab) = 2$.
Further, we denote by $\matrix(\tau)$ the square matrix of dimension 
$n = \max_{\tau' \in [\tau]_{\sim}} \index(\tau')$
where we associate to each coordinate $i = 1, \ldots, n$ the $i$-th $\approx$-equivalence class $C_i$
contained in $[\tau]_{\sim}$. The $(i,j)$-entry of $\matrix(\tau)$
is the state $q_{ij} = \delta(q_\init,\tbi{\tau_i}{\tau_j})$ where $\tau_i \in C_i$
and $\tau_j \in C_j$. Thanks to interchangeability, 
the state $q_{ij}$ is well defined
being independent of the choice of $\tau_i$ and $\tau_j$.

\begin{longversion}

\begin{example}
In the running example, we have $a \sim b$ thus $\matrix(a) = \matrix(b)$:\medskip

$
\matrix(a) = \matrix(b) = \bordermatrix{
                & {\scriptstyle \{a\}} & {\scriptstyle \{b\}}  \cr
{\scriptstyle \{a\}} & q_1             & \!\!\!\!q_3         \cr
{\scriptstyle \{b\}} & q_4             & \!\!\!\!q_2     
         }. \\
$\medskip
\mbox{}\\
Moreover $[aa]_{\approx} = \{aa,bb\}$, and $[ab]_{\approx} = \{ab\}$, and $[ba]_{\approx} = \{ba\}$,
and thus:\medskip

$
\matrix(aa) = \matrix(ab) = \matrix(bb) = \begin{pmatrix}q_1 & q_3         \\
       q_4             & q_2     \end{pmatrix}         
\text{ and } \matrix(ba) = \begin{pmatrix}q_2 \end{pmatrix}
$.\medskip
\mbox{}\\
Note that the non-diagonal entries $q_3$ and $q_4$ are ambiguous states. This
is true in general.
\end{example}
\end{longversion}

It is easy to see that diagonal entries in such matrices are reflexive states (Lemma~\ref{lem:reflexive-strong}). 
We can show conversely that non-diagonal entries are ambiguous states.

\begin{lemma}\label{lem:non-diagonal-is-ambiguous}
For all histories $\tau$, the non-diagonal entries in $\matrix(\tau)$
are ambiguous states.
\end{lemma}

\begin{proof}
Non-diagonal entries in $\matrix(\tau)$ correspond to pair of histories that are
not $\approx$-equivalent, therefore
those entries are not reflexive states (Lemma~\ref{lem:reflexive-strong}),
hence they must be ambiguous states (Lemma~\ref{lem:partition}).
\end{proof}

\begin{longversion}
Next, we show how to construct, given $\index(\tau)$ and $\matrix(\tau)$, for some
history $\tau$, and a move $\a \in \Gamma$, the index and matrix
$\index(\tau \a)$ and $\matrix(\tau \a)$. The construction is independent of
$\tau$.

First, given a $n\times n$ matrix $M$ with entries in $Q$, 
we define $\transform(M)$ to be the $n\cdot\abs{\Gamma}\times n\cdot\abs{\Gamma}$ 
matrix obtained by substituting each entry $q_{ij}$ in $M$ with the
$\abs{\Gamma}\times \abs{\Gamma}$ matrix where every $(\a,\b)$-entry
is $\delta(q_\init,\tbi{\a}{\b})$, as illustrated in the following example. 

\begin{example}
In the running example, the $\abs{\Gamma}\times \abs{\Gamma}$ matrix
associated with state $q_1$ is:

$$ q_1 \mapsto \left(\begin{array}{cc}\delta(q_1,\tbi{a}{a}) & \delta(q_1,\tbi{a}{b}) \\[+1pt]
\delta(q_1,\tbi{b}{a}) & \delta(q_1,\tbi{b}{b}) \end{array}\right) = 
\left(\begin{array}{cc} q_1 & q_3 \\[+1pt] q_4 & q_2 \end{array}\right). $$

The matrices associated with the other states are (where we denote the $\reject$ state
by $\times$):
$$ q_2 \mapsto \left(\begin{array}{cc} q_2 & \times \\ \times & q_1 \end{array}\right)  \quad
q_3 \mapsto \left(\begin{array}{cc} \times & q_1 \\ \times & q_4 \end{array}\right)   \quad
q_4 \mapsto \left(\begin{array}{cc} \times & \times \\ q_1 & q_3 \end{array}\right). 
$$

Hence for $M = \left(\begin{array}{cc} q_1 & q_3 \\ q_4 & q_2 \end{array}\right)$,
we have $\transform(M) = \left(\begin{array}{cccc} q_1 & q_3 & \times & q_1 \\ q_4 & q_2 & \times & q_4 \\
\times & \times & q_2 & \times \\ q_1 & q_3 & \times & q_1
\end{array}\right).$
\end{example}

Second, for every $n\times n$ matrix $M$ with entries in $Q$, every $i \in \{1,\ldots, n\}$, and
every move $a \in \Gamma$, we define $\successor_{\a}(M,i) = (N,j)$, by the following construction:
\begin{itemize}
\item[$(i)$]   Initialise $N = \transform(M)$; consider the $(a,a)$
  entry of the $\abs{\Gamma} \times \abs{\Gamma}$ matrix
  substituting the $(i,i)$-entry of $M$ in 
  $N$, and initialise $j$ to be its position on the diagonal of $N$;
\item[$(ii)$] for every $1 \leq k \leq n\cdot\abs{\Gamma}$, if the $(k,j)$-entry
  of $N$ is the $\reject$ state,
  then remove the $k$-th row and $k$-th column (note that
  the $j$-th row and $j$-th column are never removed) and
  update the index $j$ accordingly;
\item[$(iii)$] if two columns of $N$ are identical, then remove the column and the corresponding row
at the larger position. If the removed column is at the position $j$,
assign the (smaller) position of the remaining duplicate column to  $j$. Repeat this step
until no two columns are identical. Return the final value of the $N$ and $j$.
\end{itemize}

\begin{example}
Consider $M =  \left(\begin{array}{cc} q_1 & q_3 \\ q_4 & q_2 \end{array}\right)$
and $i = 2$, which are the matrix and index of the history $\tau = b$ in the running example. 
In figures, the index $i$ is depicted as a vertical arrow pointing to the $i$th column of the matrix.
We obtain $\successor_{a}(M,i)$ (the matrix and index of $\tau'= ba$) as follows:
\begin{align*}
\bordermatrix{ &    & \!\!\!\!\downarrow  \cr
          & q_1     & \!\!\!\!q_3         \cr
          & q_4     & \!\!\!\!q_2     
         }
\xrightarrow{(i)}
\bordermatrix{ \!\!\!   &        & \!\!\!\!       & \!\!\!\!\downarrow & \!\!\!\!   \cr
     \!\!\!             & q_1    & \!\!\!\!q_3    & \!\!\!\!\times & \!\!\!\! q_1 \cr 
     \!\!\!             & q_4    & \!\!\!\!q_2    & \!\!\!\!\times & \!\!\!\! q_4 \cr
     \!\!\!             & \times & \!\!\!\!\times & \!\!\!\!q_2    & \!\!\!\! \times \cr
     \!\!\!             & q_1    & \!\!\!\!q_3    & \!\!\!\!\times & \!\!\!\! q_1 
         }
\xrightarrow{(ii)}
\bordermatrix{ \!\!\!   & \downarrow  \cr
     \!\!\!             & q_2 \cr 
         }
\xrightarrow{(iii)}
\bordermatrix{ \!\!\!   & \downarrow  \cr
     \!\!\!             & q_2 \cr 
         }
\end{align*}
and we obtain $\successor_{b}(M,i)$ (the matrix and index of $\tau'=bb$) as follows:
\begin{align*}
\bordermatrix{ &    & \!\!\!\!\downarrow  \cr
          & q_1     & \!\!\!\!q_3         \cr
          & q_4     & \!\!\!\!q_2     
         }
\xrightarrow{(i)}
\bordermatrix{ \!\!\!   &        & \!\!\!\!       & \!\!\!\!       & \!\!\!\! \downarrow  \cr
     \!\!\!             & q_1    & \!\!\!\!q_3    & \!\!\!\!\times & \!\!\!\! q_1 \cr 
     \!\!\!             & q_4    & \!\!\!\!q_2    & \!\!\!\!\times & \!\!\!\! q_4 \cr
     \!\!\!             & \times & \!\!\!\!\times & \!\!\!\!q_2    & \!\!\!\! \times \cr
     \!\!\!             & q_1    & \!\!\!\!q_3    & \!\!\!\!\times & \!\!\!\! q_1 
         }
\xrightarrow{(ii)}
\bordermatrix{ \!\!\!   &        & \!\!\!\!       &  \!\!\!\! \downarrow  \cr
     \!\!\!             & q_1    & \!\!\!\!q_3    &  \!\!\!\! q_1 \cr 
     \!\!\!             & q_4    & \!\!\!\!q_2    &  \!\!\!\! q_4 \cr
     \!\!\!             & q_1    & \!\!\!\!q_3    &  \!\!\!\! q_1 
         }
\xrightarrow{(iii)}
\bordermatrix{ \!\!\!   & \downarrow & \!\!\!\!  \cr
     \!\!\!             & q_1    & \!\!\!\!q_3   \cr 
     \!\!\!             & q_4    & \!\!\!\!q_2   \cr
         }.
\qedhere
\end{align*}
\end{example}

With the successor function along moves defined in this way, we obtain an
homomorphic image of $\Gamma^*$ on matrix-index pairs.   

\begin{lemma}\label{lem:successor}
For all histories $\tau \in \Gamma^*$ and moves $c \in \Gamma$,
if $(M, i) = (\matrix(\tau), \index( \tau ))$,
then $\successor_{c}(M,i) = (\matrix(\tau c), \index(\tau c))$.
\end{lemma}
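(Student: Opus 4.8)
The plan is to read off, from the definition of $\successor_c$, the semantic meaning of each of its three steps and to match it against the construction of $\matrix(\tau c)$ and $\index(\tau c)$. I would fix the $\approx$-classes $C_1,\dots,C_n$ contained in $[\tau]_\sim$, ordered as in the definition of $\matrix$, with least representatives $\tau_1 \leq_\lex \cdots \leq_\lex \tau_n$ and with $\tau \in C_i$, so that $i = \index(\tau)$. The first observation is that $\transform(M)$ is indexed by pairs $(i',a) \in \{1,\dots,n\}\times\Gamma$ and that, unfolding the extended transition function, its $((i',a),(j',b))$-entry equals $\delta(q_\init,\tbi{\tau_{i'}a}{\tau_{j'}b})$. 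Thus the rows and columns of $\transform(M)$ enumerate, through the representatives $\tau_{i'}a$, all one-step continuations $\sigma d$ with $\sigma \in [\tau]_\sim$, and each entry records the state that $\calR$ reaches on the corresponding pair. The remaining work is to show that step (i) selects the column of $\tau c$, step (ii) restricts to $[\tau c]_\sim$, and step (iii) quotients by $\approx$ while preserving the prescribed order.

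For steps (i)--(ii) I would first record the elementary fact that appending a fixed move preserves interchangeability: if $\sigma \approx \sigma'$ then $\sigma d \approx \sigma' d$, since $\delta(q_\init,\tbi{\sigma d}{\rho e}) = \delta(\delta(q_\init,\tbi{\sigma}{\rho}),\tbi{d}{e})$ and $\delta(q_\init,\tbi{\sigma}{\rho})=\delta(q_\init,\tbi{\sigma'}{\rho})$. As $\tau \approx \tau_i$, the column $(i,c)$ chosen in step (i) represents the continuation $\tau_i c \approx \tau c$, so the initial $j$ marks the $\approx$-class of $\tau c$. By the perfect-recall condition (Lemma~\ref{lem:ind-char}), any history indistinguishable from $\tau c$ has the form $\sigma d$ with $\sigma \sim \tau$; and for a continuation $\tau_{i'}a$ one has $\tau_{i'}a \sim \tau c$ if and only if $\delta(q_\init,\tbi{\tau_{i'}a}{\tau_i c}) \neq \reject$, i.e.\ the $((i',a),j)$-entry is not the sink. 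Hence step (ii) deletes exactly the indices whose continuation lies outside $[\tau c]_\sim$, leaving the principal submatrix indexed by those $(i',a)$ with $\tau_{i'}a \in [\tau c]_\sim$; the column $j$ is never deleted, since its diagonal entry $\delta(q_\init,\tbi{\tau_i c}{\tau_i c})$ is reflexive, hence distinct from $\reject$ (Lemma~\ref{lem:partition}).

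For step (iii) the key claim is that two surviving columns $(i',a)$ and $(i'',a')$ are identical if and only if $\tau_{i'}a \approx \tau_{i''}a'$. The forward implication is immediate from the well-definedness of the matrix entries (interchangeable histories on the second tape yield equal states). The converse is the main obstacle: from agreement of the two columns over the \emph{surviving} rows I must deduce equality of $\delta(q_\init,\tbi{w}{\tau_{i'}a})$ and $\delta(q_\init,\tbi{w}{\tau_{i''}a'})$ for \emph{every} history $w$ of length $\abs{\tau}+1$. Here I would split on $w = \sigma d$: if $\sigma \not\sim \tau$ or $\sigma d \not\sim \tau c$, then since $\tau_{i'}a,\tau_{i''}a' \sim \tau c$ both states equal $\reject$ (by perfect recall and transitivity of $\sim$); otherwise $\sigma \approx \tau_{i'''}$ for a surviving representative, and appending $d$ together with well-definedness reduces both states to a surviving-row entry, where they agree by hypothesis. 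This shows that collapsing identical columns (and the symmetric rows) merges precisely the $\approx$-equivalent continuations, so the resulting matrix, carrying the entries $\delta(q_\init,\tbi{\tau_{i'}a}{\tau_{i''}a'})$ for one representative per class, is exactly $\matrix(\tau c)$.

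It then remains to check that step (iii) returns the classes in the order used by $\matrix(\tau c)$ and that $j$ ends at $\index(\tau c)$. The crucial order lemma is that the least word of a continuation class satisfies $\min [\sigma d]_\approx = (\min [\sigma]_\approx)\,d$: were some $\sigma' <_\lex \sigma$ interchangeable with $\sigma$, then $\sigma' d \approx \sigma d$ would be lexicographically smaller, a contradiction. Consequently the least representative of a continuation class $C'$ is $\tau_{j'}e$, and ordering continuation classes by least word coincides with ordering the columns $(j',e)$ lexicographically by position, because a strict difference already appears in the length-$\abs{\tau}$ prefix $\tau_{j'}$ versus $\tau_{j''}$. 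Since step (iii) keeps, for each class, the column at the smallest surviving position, it retains exactly the column representing $\min C'$ and lists the classes in increasing order of least word; tracking the reassignments of $j$ (which moves to the smaller position whenever the marked column is merged away) therefore delivers $\index(\tau c)$. Matching the three steps with these three semantic operations yields $\successor_c(M,i) = (\matrix(\tau c),\index(\tau c))$.
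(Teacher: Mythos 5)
Your proposal follows the same route as the paper's proof, which is itself only a three-part sketch of exactly these correspondences: the rows/columns of $\transform(M)$ track the $\approx$-classes of the one-letter continuations (via the fact that $w \approx w'$ implies $wb \approx w'b$), step (ii) deletes precisely the continuations outside $[\tau c]_{\sim}$, and step (iii) merges columns of $\approx$-equivalent continuations while preserving the lexicographic order. Your elaboration of step (iii) — proving both directions of ``identical surviving columns iff $\approx$-equivalent continuations'', with the case split sending everything outside $[\tau c]_{\sim}$ to $\reject$ and reducing everything inside to a surviving-row entry — is correct and supplies detail the paper omits.

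The flaw is your ``crucial order lemma'': $\min [\sigma d]_{\approx} = (\min[\sigma]_{\approx})\,d$ is false. In the paper's own running example (\figurename~\ref{fig:geom}), $[b]_{\approx} = \{b\}$ and $[bb]_{\approx} = \{aa,bb\}$, so $\min[bb]_{\approx} = aa$ whereas $(\min[b]_{\approx})\,b = bb$. The point is that distinct $\approx$-classes at one level can have continuations that merge into a single $\approx$-class at the next level, so the minimum of a continuation class may extend a \emph{different} representative with a \emph{different} letter than the pair $(\sigma,d)$ you started from. What your parenthetical argument actually proves — and all that your subsequent reasoning uses — is the correct contrapositive-style statement: if $\rho e$ is lexicographically least in $[\rho e]_{\approx}$, then $\rho$ is least in $[\rho]_{\approx}$; hence $\min C'$ has the form $\tau_{j'}e$ for \emph{some} surviving representative $\tau_{j'}$ and some letter $e$. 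With the lemma restated in that form, the retained column for each class is indeed its minimum, the position order of retained columns agrees with the order of classes by least word, and the index tracking delivers $\index(\tau c)$; so the proof is repairable in place. (A minor point, glossed over by the paper as well: your step-(iii) converse establishes interchangeability on the second tape, whereas $\approx$ is defined via the first tape; identifying the two requires the remark that, by minimality of $\calR$ and symmetry of $\sim$, the state reached on a transposed input pair is determined by the state reached on the original pair.)
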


\begin{proof}
The result follows from the following remarks:
\begin{itemize}
\item In step $(i)$, since $M = \matrix(\tau)$ we can associate to each row/column
of $M$ an $\approx$-equivalence class (contained in $[\tau]_{\sim}$),
say $C_1, C_2, \dots, C_n$. For $\b \in \Gamma$, and $C$ an $\approx$-equivalence class,
let $C\b = [w\b]_{\approx}$ for $w \in C$ (which is independent of the choice
of $w$ and thus well-defined - it is easy to prove that $w \approx w'$ implies 
$w \b \approx w' \b$).
We can associate to the rows/columns of $\transform(M)$
the $\approx$-equivalence classes $C_j \b$ (for each $1\leq j \leq n$ and $\b \in \Gamma$)
in lexicographic order. The stored index is the index of the $\approx$-equivalence class
of $\tau \a$.
\item In step $(ii)$, we remove the rows/columns associated with an 
$\approx$-equivalence class that is not contained in $[\tau\a]_{\sim}$
The stored index (pointing to the $\approx$-equivalence class containing $\tau\a$) 
is updated accordingly.
\item In step $(iii)$, we merge identical rows/columns which correspond 
to identical $\approx$-equivalence classes. Keeping the leftmost class
ensures the lexicographic order between $\approx$-equivalence classes is preserved. 
At the end, each $\approx$-equivalence class contained in $[\tau\a]_{\sim}$
is indeed associated to some row/column, and the resulting matrix is $\matrix(\tau \a)$
with the correct index $\index(\tau \a)$.\qedhere
\end{itemize}
\end{proof}
\end{longversion}

\subsection{Constructing the observation automaton}

\begin{longversion}
For the remainder of the paper, let us  
fix an alphabet $\Gamma$ and a
two-tape \dfa~$\calR = \tuple{Q,\Gamma \times \Gamma, \delta,q_\init,F}$
such that
the branching degree of the information tree $\Gamma^*/_{L( \calR )}$ is bounded.
Let $m$ be the size of $\calR$.
\end{longversion}

\begin{shortversion}
For the remainder of the paper, let us assume
that the branching degree of the information tree $\Gamma^*/_{L( \calR )}$ is bounded. 
\end{shortversion}


We define a Mealy automaton $\calF = \tuple{P,\Gamma, \Obs, p_\init, \delta, \lambda}$ over 
the input alphabet $\Gamma$ and an output alphabet $\Obs$ in two phases: 
first, we define the semi-automaton $\calF_0 = (P, \Gamma, p_\init,\delta) $
and then we construct the output alphabet $\Obs$ and
the output function $\lambda$. To define the semi-automaton $\calF_0$, we set:   
\begin{itemize}
\item $P := \{ \tuple{M,i} \mid M = \matrix(\tau) \text{ and } i = \index(\tau) \text{ for some history } \tau\}$,

\item $p_\init := \tuple{q_\init,1}$,

\item for every state $\tuple{M,i} \in P$ and every move $c \in \Gamma$, let $\delta(\tuple{M,i}, c) = 
\successor_{c}(M,i)$.
\end{itemize}

\begin{shortversion}
The construction of the Mealy automaton for the two-tape \dfa of \figurename~\ref{fig:running-example-a}
is shown in \figurename~\ref{fig:running-example-tc-transitions}. 
The variables $x,y,z,r,s,t,u,v$ represent the observation values
of the output function. We determine the value of the variables 
by considering pairs of histories in the automaton, and in the \dfal. 
For example, for $\tau = a$ and $\tau' = b$, we have $\tau \sim \tau'$ (according
to the \dfa), and therefore we derive the constraint $x = y$ in the \dfal.
We can show that the constraints are satisfiable and that every satisfying assignment describes 
an output function $\lambda\colon P \times \Gamma \to \Obs$ 
such that $\tuple{P,\Gamma, \Obs,p_\init, \delta,\lambda}$ is an 
observation automaton equivalent to the \dfa (see \figurename~\ref{fig:running-example-tc-observations} for the running example).
\end{shortversion}

According to Lemma~\ref{lem:successor}, the state space $P$ is the closure of $\{p_\init\}$ under
the $c$-successor operation, for all $c \in \Gamma$. It remains to show that $P$ is finite.
The key is to bound the dimension of the largest matrix in $P$, which is the size of the largest ambiguous clique.



\begin{lemma}\label{lem:clique-size}
If the branching degree of the information tree $\Gamma^*/_{L( \calR )}$ is bounded, 
then the largest ambiguous clique contains at most a doubly-exponential number of histories
(with respect to the size of $\calR$).
\end{lemma}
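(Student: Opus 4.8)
The plan is to bound the size $w$ of an arbitrary ambiguous clique $\{\sigma_1,\dots,\sigma_w\}$ directly, in terms of the branching degree $d$ of the information tree and the size $m = \abs{\calR}$. First I would record two features of such a clique that follow straight from the definitions. Since every pair $(\sigma_i,\sigma_j)$ is ambiguous, the state $\delta(q_\init,\tbi{\sigma_i}{\sigma_j})$ is distinct from $\reject$ and hence accepting, so $\sigma_i \sim \sigma_j$; by transitivity all clique members lie in a single information set $U = [\sigma_1]_\sim$, say at level $\ell$. Moreover, being ambiguous, each such state admits a \emph{diagonal} word driving it to $\reject$, and the shortest one has length at most $\abs{Q}$, as it corresponds to a shortest path to the sink in the graph of diagonal transitions $\tbi{c}{c}$. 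Writing $L := \abs{Q}$, I thus obtain for each pair a suffix $\pi_{ij} \in \Gamma^{\le L}$ with $\sigma_i\pi_{ij} \not\sim \sigma_j\pi_{ij}$.

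The crux is that different pairs may be separated by \emph{different} suffixes, so one cannot hope to pull all members apart with a single common extension; instead I would attach to each member its entire separation behaviour at depth $L$. Concretely, define for each $i$ the labelling $g_i\colon \Gamma^{L} \to \Gamma^*/_{\mathop{\sim}}$ by $g_i(\rho) = [\sigma_i\rho]_\sim$. I claim $g_i \neq g_j$ whenever $i \neq j$: pad $\pi_{ij}$ to a word $\rho$ of length exactly $L$; since the indistinguishability relation is prefix closed (condition~(\ref{it:ind-prefcl}) of Lemma~\ref{lem:ind-char}), the separation $\sigma_i\pi_{ij}\not\sim\sigma_j\pi_{ij}$ is preserved by this extension, so $g_i(\rho) \neq g_j(\rho)$. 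Hence $w$ is at most the number of distinct labellings.

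It remains to count labellings, and this is where bounded branching enters. Every value $g_i(\rho) = [\sigma_i\rho]_\sim$ is reachable from $U$ by a path of length $L$ in the information tree, hence belongs to the set of descendants of $U$ at depth exactly $L$, of which there are at most $d^{L}$. Thus each $g_i$ is a map $\Gamma^{L} \to S$ with $\abs{S} \le d^{L}$, and the number of such maps is at most $(d^{L})^{\abs{\Gamma}^{L}}$. By Proposition~\ref{prop:KuskeLoh}, the bounded branching degree satisfies $d \le 2^{2^{m^{O(1)}}}$, as the automatic presentation of the information tree has size $O(m)$; together with $L = \abs{Q} \le m$ this yields
\begin{align*}
  \log_2 w \;\le\; L\cdot\abs{\Gamma}^{L}\cdot \log_2 d \;\le\; 2^{m^{O(1)}},
\end{align*}
so that $w \le 2^{2^{m^{O(1)}}}$ is doubly exponential in $m$, as required.

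The main obstacle is the step in the second paragraph: recognising that a single common separating suffix need not exist\,---\,an ambiguous pair may become interchangeable under an ill-chosen common extension\,---\,and that the right remedy is to separate members via the full behaviour $g_i$ over all depth-$L$ suffixes, using prefix closure to propagate each pairwise separation and the bounded-branching bound $d^L$ on the range to keep the final count doubly exponential. The length estimate $L \le \abs{Q}$ for the separating suffixes, and the invocation of the Kuske--Lohrey degree bound, are the two quantitative ingredients that pin down the exponent.
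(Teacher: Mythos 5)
Your proof is correct, and it takes a genuinely different route from the paper's. The paper argues in two stages: it first rules out unbounded cliques by contradiction, using Ramsey's theorem to extract from an arbitrarily large clique a set of histories pairwise driven to one and the same ambiguous state, whose minimal-length diagonal escape to $\reject$ then yields arbitrarily many successors of a single information set, contradicting bounded branching; it then obtains the quantitative bound by identifying the maximal clique size with the maximal number of $\approx$-classes inside a $\sim$-class (via Lemma~\ref{lem:non-diagonal-is-ambiguous}) and applying Weber's theorem to the regular representation relation, giving $2^{2^{O(m^2)}}$. You instead give one direct counting argument: every clique member $\sigma_i$ is injectively encoded by its depth-$L$ separation profile $g_i\colon \Gamma^L \to S$, where $L \le \abs{Q}$ comes from a shortest-path bound on diagonal words, injectivity comes from prefix closure (perfect recall), and $S$ is the set of depth-$L$ descendants of the common information set, of size at most $d^L$; the count $(d^L)^{\abs{\Gamma}^L}$ then becomes doubly exponential once $d$ is bounded through Proposition~\ref{prop:KuskeLoh}. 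Your route avoids Ramsey and Weber altogether and uses the bounded-branching hypothesis quantitatively rather than merely for a contradiction; the price is that your exponent $m^{O(1)}$ is inherited from the Kuske--Lohrey black box, whereas the paper's Weber-based argument yields the explicit exponent $O(m^2)$ and, in passing, ties cliques to the $\approx$-class structure that the subsequent construction of the Mealy automaton reuses. Both bounds are doubly exponential, so either version suffices for the triply exponential bound of Theorem~\ref{theo:characterization}.
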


\begin{proof}
First we show by contradiction that the size of the ambiguous cliques is bounded.
Since the number of ambiguous states in $\calR$ is finite, if there exists an arbitrarily large ambiguous clique, 
then by Ramsey's theorem~\cite{Ramsey30}, there exists an arbitrarily large set $\{\tau_1, \tau_2, \ldots,\tau_k\}$ of histories and a state 
$q \in \Ambiguous \setminus \{\reject\}$ such that $\delta(q_\init,\tbi{\tau_i}{\tau_j}) = q$ for all $1 \leq i < j \leq k$. 
By definition of~$\Ambiguous$, there exists a nonempty history $\tau c$ 
such that $\delta(q,\tbi{\tau c}{\tau c}) = \reject$. Consider such a history $\tau c$ of minimal length. 
The histories $\tau_i \tau$ ($i= 1,\ldots,k$) are in the same $\mathop\sim$-equivalence
class, but the equivalence classes $[\tau_i \tau c]_{\mathop{\sim}}$ are pairwise distinct.
Therefore, the number of successors of $[\tau_i \tau]_{\mathop{\sim}}$ is at least $k$,
thus arbitrarily large, in contradiction with the assumption that the branching degree 
the information tree $\Gamma^*/_{L( \calR )}$ is bounded.

Note that the size of the largest ambiguous clique corresponds to the maximum number of $\approx$-equivalence classes contained in an
$\sim$-equivalence class (Lemma~\ref{lem:non-diagonal-is-ambiguous}). We show that this number is at most doubly-exponential. 
Similarly to the proof of Theorem~\ref{thm:char},
we notice that the set of \emph{$\approx$-representatives} defined by 
$\{ \tau \in \Gamma^* \mid \tau \leq_\mathrm{lex} \tau' \text{ for all } \tau' \approx \tau \}$ is regular,
and therefore the representation relation $\{ (\tau, \tau') \in \mathop{\sim} \mid \tau' \text{ is a $\approx$-representative} \}$
is also regular. Using a result of Weber~\cite[Theorem~2.1]{Weber89}, there is a bound on the number of $\approx$-representatives
that a history can have that is exponential in the size $\ell$ of the two-tape \dfa recognising the representation relation, 
namely $O(\ell)^\ell$, and $\ell$ is bounded by $2^{O(m^2)}$ by the same argument as in the proof of Theorem~\ref{thm:char}
(where $m$ is the size of $\calR$). This provides a doubly-exponential bound $2^{2^{O(m^2)}}$ on the size of the ambiguous cliques.
\end{proof}

According to Lemma~\ref{lem:clique-size}, the dimension $k$ of the largest matrix in $P$ is at most doubly exponential in $\abs{\calR}$. 
The number of matrices of a fixed dimension $d$ is at most $\abs{Q}^{d^2}$.
Overall the number of matrices that appear in $P$ is therefore bounded by $k \cdot \abs{Q}^{k^2}$,
and as the index is at most $k$, it follows that the number of states in $P$ is bounded by $k^2 \cdot \abs{Q}^{k^2}$, 
that is exponential in $k$ and triply exponential in the size of $\calR$.





\begin{figure}[!tb]
  \begin{center}
  \centering
  \begin{subfigure}[b]{.48\linewidth}
    \begin{gpicture}(65,60)(1,0)
      \gasset{Nframe=n, Nw=8,Nh=8,Nmr=0, rdist=1, loopdiam=5}

      \node[Nmarks=i](q1)(10,30){\raisebox{16pt}{\bordermatrix{ \!\!\!   & \downarrow  \cr
            \!\!\!             & q_1 \cr 
      }}}

      \node[Nmarks=n, Nw=14, Nh=17](q2)(35,45){\raisebox{21pt}{\!\!\!\bordermatrix{ & \!\!\downarrow & \cr
            & q_1     & \!\!\!\!q_3         \cr
            & q_4     & \!\!\!\!q_2     
      }}}
      \node[Nmarks=n, Nw=14, Nh=17](q3)(35,15){\raisebox{21pt}{\!\!\!\bordermatrix{ &    & \!\!\!\!\downarrow  \cr
            & q_1     & \!\!\!\!q_3         \cr
            & q_4     & \!\!\!\!q_2     
      }}}

      \node[Nmarks=n](q4)(60,15){\raisebox{16pt}{\bordermatrix{ \!\!\!   & \downarrow  \cr
            \!\!\!             & q_2 \cr 
      }}}

      \drawedge[ELpos=50, ELside=l, curvedepth=0](q1,q2){$a \mapsto x$}
      \drawedge[ELpos=50, ELside=r, curvedepth=0](q1,q3){$b \mapsto y$}

      \node[Nmarks=n, Nw=14, Nh=18](q2)(35,48){}
      \node[Nmarks=n, Nw=14, Nh=18](q3)(35,15){}

      \drawedge[ELpos=50, ELside=l, curvedepth=6, sxo=-2,exo=-2](q2,q3){$b \mapsto r$}
      \drawedge[ELpos=50, ELside=l, curvedepth=6, sxo=2,exo=2](q3,q2){$b \mapsto t$}

      \drawedge[ELpos=53, ELside=l, curvedepth=0](q3,q4){$a \mapsto s$}
      \drawedge[ELpos=50, ELside=l, curvedepth=18](q4,q1){$b \mapsto v$}

      \node[Nmarks=n, Nw=12, Nh=12](q2)(35,45){}

      \drawloop[ELside=l,loopCW=y, loopdiam=5, loopangle=45, loopdiam=4](q2){$a \mapsto z$}
      \drawloop[ELside=l,loopCW=y, loopdiam=5, loopangle=270, loopdiam=4](q4){$a \mapsto u$}
    \end{gpicture}   
     \caption{Transition structure \label{fig:running-example-tc-transitions}}
  \end{subfigure}~
  \begin{subfigure}[b]{.48\linewidth}
    \begin{gpicture}(65,60)(1,0)
      \gasset{Nframe=y, Nw=8,Nh=8,Nmr=4, rdist=1, loopdiam=5}

      \node[Nmarks=i](q1)(10,30){}
      \node[Nmarks=n](q2)(35,45){}
      \node[Nmarks=n](q3)(35,15){}
      \node[Nmarks=n](q4)(60,15){}

      \drawedge[ELpos=50, ELside=l, curvedepth=0](q1,q2){$a \mapsto 1$}
      \drawedge[ELpos=50, ELside=r, curvedepth=0](q1,q3){$b \mapsto 1$}

      \drawedge[ELpos=50, ELside=l, curvedepth=6, sxo=-1,exo=-1](q2,q3){$b \mapsto 1$}
      \drawedge[ELpos=50, ELside=l, curvedepth=6, sxo=1,exo=1](q3,q2){$b \mapsto 1$}

      \drawedge[ELpos=53, ELside=l, curvedepth=0](q3,q4){$a \mapsto 2$}
      \drawedge[ELpos=50, ELside=l, curvedepth=18](q4,q1){$b \mapsto 2$}

      \drawloop[ELside=l,loopCW=y, loopdiam=5, loopangle=90, loopdiam=4](q2){$a \mapsto 1$}
      \drawloop[ELside=l,loopCW=y, loopdiam=5, loopangle=90, loopdiam=4](q4){$a \mapsto 1$}
    \end{gpicture}
      \caption{Instantiated observations \label{fig:running-example-tc-observations}}
   \end{subfigure}
  \end{center}
  \caption{Construction of the \dfal from the two-tape \dfa of \figurename~\ref{fig:running-example-a} \label{fig:running-example-tc}}
\end{figure}

\begin{longversion}
The construction of the Mealy automaton for the two-tape \dfa of \figurename~\ref{fig:running-example-a}
is shown in \figurename~\ref{fig:running-example-tc-transitions}. 
The variables $x,y,z,r,s,t,u,v$ represent the (currently) unknown observation values
of the output function. We will build a system of constraints over these variables
by considering pairs of histories in the automaton, and in the \dfal. 
For example, for $\tau = a$ and $\tau' = b$, we have $\tau \sim \tau'$ (according
to the automaton), and therefore we derive the constraint $x = y$ in the \dfal.

We are now ready to define the output function.
Towards this, we associate to each state $p \in P$
and letter $\a \in \Gamma$, a variable $x_{p,\a}$ intended to represent the
output value $\lambda(p,\a)$.
We gather all constraints that these variables should satisfy to describe
a valid output function, and we show that the constraints are satisfiable.

For the semi-automaton $\calF_0$
defined so far, consider
the parallel product $\calF_0 \prl \calF_0$ (which is a  semi-automaton over the alphabet
$\Gamma \times \Gamma$), and the synchronised product of $\calF_0 \prl \calF_0$
with $\calR$ (thus again a semi-automaton over alphabet $\Gamma \times \Gamma$).

Our constraints are either equality or disequality
between variables. We construct a set~$\Phi$ of constraints by looking at
the synchronised product $(\calF_0 \prl \calF_0) \times R$:
for every reachable state $((p_1,p_2),q)$ with $q \neq \reject$ 
and all letters $\a, \b \in \Gamma$ (possibly $\a = \b$),
if $\delta(q,\tbi{\a}{\b}) \neq \reject$, then add
the constraint $x_{p_1,\a} = x_{p_2,\b}$ to $\Phi$, otherwise add
the constraint $x_{p_1,\a} \neq x_{p_2,\b}$ to $\Phi$.

\begin{example}
We obtain the following set of constraints
for the \dfal of \figurename~\ref{fig:running-example-tc} (we omit
trivial constraints such as $x=x$):
\begin{align*}
  \begin{array}{ll@{\quad }|@{\quad }l}
    x=y &\text{ witnessed by }  a \sim b   & s \neq t \text{ witnessed by } ba \not\sim bb \\
    t=z &\text{ witnessed by } aa \sim bb  & u \neq v \text{ witnessed by } baa \not\sim bab\\
    r=t &\text{ witnessed by } ab \sim bb  & z \neq s \text{ witnessed by } aa \not\sim ba \\
    z=r &\text{ witnessed by } aa \sim ab  & r \neq s \text{ witnessed by } ab \not\sim ba \\
  \end{array}
\end{align*}
which is equivalent to the set of constraints $\{x=y, z=r=t, t \neq s, u \neq v\}$ and
is satisfiable, e.g., with the following assignment (see \figurename~\ref{fig:running-example-tc-observations}):
\begin{align*}
  \begin{array}{l@{\qquad\qquad }l@{\qquad\qquad }l}
    x=y=1   & s=2 & u=1 \\
    z=r=t=1 &     & v=2\\
  \end{array}
\end{align*}
\end{example}

\begin{lemma}\label{lem:constraints}
\begin{itemize}
\item The set $\Phi$ of constraints is satisfiable (over any infinite domain).
\item Every satisfying assignment for $\Phi$ describes 
an output function $\lambda\colon P \times \Gamma \to \Obs$ 
such that $\tuple{P,\Gamma, \Obs,p_\init, \delta,\lambda}$ is an 
observation automaton equivalent to $\calR$.
\end{itemize}
\end{lemma}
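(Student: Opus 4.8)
The plan is to handle the two bullets separately: the equivalence claim follows directly from how $\Phi$ was built, whereas the satisfiability claim is the real work and I would settle it by exhibiting one explicit model read off from the matrices.

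\textbf{Equivalence, granting satisfiability.} Given a satisfying assignment I put $\lambda(p,a)$ equal to the value of $x_{p,a}$ and write $\beta$ for the induced observation function; it suffices to show $\ker\hat\beta = \mathop\sim$, stage by stage. For $\mathop\sim\,\subseteq\ker\hat\beta$, take $\tau\sim\tau'$ with $\tau = c_1\cdots c_\ell$, $\tau'=c_1'\cdots c_\ell'$. Perfect recall (condition~(\ref{it:ind-prefcl}) of Lemma~\ref{lem:ind-char}) yields, for each $t$, that the prefixes $\rho=c_1\cdots c_{t-1}$ and $\rho'=c_1'\cdots c_{t-1}'$ satisfy $\rho\sim\rho'$ and $\rho c_t\sim\rho'c_t'$. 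With $p_1=\delta(p_\init,\rho)$, $p_2=\delta(p_\init,\rho')$ and $q=\delta(q_\init,\tbi{\rho}{\rho'})\neq\reject$, the triple $((p_1,p_2),q)$ is reachable in $(\calF_0\prl\calF_0)\times\calR$ and $\delta(q,\tbi{c_t}{c_t'})\neq\reject$, so $\Phi$ contains $x_{p_1,c_t}=x_{p_2,c_t'}$ and the $t$-th observations agree. For the converse I would induct on length: if $\hat\beta(\tau c)=\hat\beta(\tau'c')$ then the prefixes already agree, so $\tau\sim\tau'$ by the hypothesis; and were $\tau c\not\sim\tau'c'$, i.e.\ $\delta(q,\tbi{c}{c'})=\reject$, then $\Phi$ would contain the disequality $x_{\delta(p_\init,\tau),c}\neq x_{\delta(p_\init,\tau'),c'}$, contradicting the agreement of the last observations.

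\textbf{Satisfiability — the main obstacle.} As $\Phi$ only contains equalities and disequalities, it is satisfiable over any infinite domain as soon as one model exists; the difficulty is that a single state $p=\tuple{M,i}$ is reached by histories $\tau$ lying in genuinely different $\mathop\sim$-classes, so the naive value ``$[\tau a]_\sim$'' does not factor through the variable $x_{p,a}$. I would instead replace the $\mathop\sim$-class by a finite invariant that $M$ already determines. The starting point is that for $\tau\sim\tau'$ the state $\delta(q_\init,\tbi{\tau}{\tau'})$ equals the $(\index(\tau),\index(\tau'))$-entry of the common matrix $M=\matrix(\tau)=\matrix(\tau')$, by definition of $\matrix$ and interchangeability (Lemma~\ref{lem:reflexive-strong}); hence $\tau a\sim\tau'b$ holds if and only if $\delta\bigl(M_{\index(\tau)\,\index(\tau')},\tbi{a}{b}\bigr)\neq\reject$. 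This invites the purely combinatorial relation $(i,a)\equiv_M(j,b):\Leftrightarrow\delta(M_{ij},\tbi{a}{b})\neq\reject$ on the abstract edge set $\{1,\dots,d\}\times\Gamma$, where $d$ is the size of $M$. Realising $M$ by any concrete $\mathop\sim$-class shows that $\equiv_M$ is exactly the restriction of $\mathop\sim$ to successor edges, so $\equiv_M$ is an equivalence relation (reflexivity uses that the diagonal entries $M_{ii}$ are reflexive states, from which reading $\tbi{a}{a}$ never reaches $\reject$).

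\textbf{The assignment.} I would then set $x_{p,a}$, for $p=\tuple{M,i}$, to the rank of the $\equiv_M$-class of $(i,a)$ in a fixed enumeration of the $\equiv_M$-classes (say by least element under some fixed order on edges). Since this depends only on $M$, $i$ and $a$, it is a well-defined value for the variable and ranges over a finite set, which serves as $\Obs$. To check that it satisfies $\Phi$, consider a constraint from a reachable $((p_1,p_2),q)$ with $q\neq\reject$: then $\tau\sim\tau'$, so $p_1=\tuple{M,i}$ and $p_2=\tuple{M,j}$ share the matrix $M$, and $q=M_{ij}$. The equality case $\delta(q,\tbi{a}{b})\neq\reject$ is precisely $(i,a)\equiv_M(j,b)$, giving equal ranks, while the disequality case is its negation, giving distinct ranks. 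Both bullets then follow, and in fact this produces a model already over a finite domain. The one step that needs genuine care is the well-definedness of $\equiv_M$ and of its ranking from $M$ alone, independently of which history — and which $\mathop\sim$-class — produced $M$: this is exactly what makes the finite state $\tuple{M,i}$ carry enough information to localise the successor structure, and it is where the matrix machinery of the preceding subsection is essential.
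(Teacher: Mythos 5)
Your proposal is correct, and it diverges from the paper precisely where it matters. The second bullet (every satisfying assignment yields an observation automaton equivalent to $\calR$) is argued essentially as in the paper: induction on the length of histories, reading off the relevant equality or disequality constraint from a reachable state of $(\calF_0 \prl \calF_0)\times\calR$. Your satisfiability argument, however, takes a genuinely different route. The paper never builds an assignment: it reduces satisfiability of the equality/disequality system to the absence of a chain $x_{p,\a}=x_{r,\b}=\dots=x_{s,\gamma}$ in $\Phi$ together with the disequality $x_{p,\a}\neq x_{s,\gamma}$, and refutes such a chain by re-witnessing: since $p,r,s$ share one matrix $M$, the witnessing history pairs can be replaced by histories $u_1,u_2,u_3$ chosen in the $\approx$-classes of a \emph{single} $\sim$-class, whereupon transitivity of $\sim$ yields a contradiction. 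You instead exhibit an explicit finite model: for each matrix $M$, the combinatorial relation $(i,\a)\equiv_M(j,\b)\Leftrightarrow\delta(M_{ij},\tbi{\a}{\b})\neq\reject$ is an equivalence relation (being the pullback of $\sim$ to successor edges of any $\sim$-class realising $M$), and assigning to $x_{\tuple{M,i},\a}$ the rank of the $\equiv_M$-class of $(i,\a)$ satisfies every constraint, because a reachable product state with $q\neq\reject$ forces its witnessing histories to satisfy $\tau\sim\tau'$, hence $\matrix(\tau)=\matrix(\tau')$ and $q=M_{ij}$, so every constraint of $\Phi$ relates variables over a common matrix and mirrors $\equiv_M$ exactly. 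Both arguments rest on the same core fact---all constraints localise inside a single $\sim$-class, where transitivity of $\sim$ applies---but yours is constructive and gives for free a bound $\abs{\Obs}\le d\cdot\abs{\Gamma}$, with $d$ the maximal matrix dimension (the maximal ambiguous-clique size of Lemma~\ref{lem:clique-size}), which plugs directly into the size analysis behind Theorem~\ref{theo:characterization}; the paper's version is shorter because it never has to verify well-definedness of a concrete assignment, at the price of saying nothing explicit about the output alphabet.
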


\begin{proof}

  For the first point, it is sufficient to show that no contradiction occurs 
in~$\Phi$, namely that the following situations are impossible: $\Phi$ contains
the constraint $x_1 \neq x_k$ and a chain of equalities between variables 
$x_1 = x_2$, $x_2 = x_3, \ldots, x_{k-1} = x_k$. Towards a contradiction,
suppose that such a situation occurs ---\,with $k=3$ for simplicity of presentation,
the argument generalises straightforwardly to every finite $k$\,--- and
assume $x_{p,\a} = x_{r,\b} = x_{s,\gamma}$ and $x_{p,\a} \neq x_{s,\gamma}$
are constraints in $\Phi$. It follows that:
\begin{enumerate}[(1)]
\item there exist histories $u_1$, $u_2$ such that 
	\begin{itemize}
	\item $p = \delta(p_\init, u_1)$, 
	\item $r = \delta(p_\init, u_2)$,
	\item $u_1 \a \sim u_2 \b$;
	\end{itemize}
\item there exist histories $v_2$, $v_3$ such that 
	\begin{itemize}
	\item $r = \delta(p_\init, v_2)$, 
	\item $s = \delta(p_\init, v_3)$,
	\item $v_2 \b \sim v_3 \gamma$;
	\end{itemize}
\item there exist histories $w_1$, $w_3$ such that $w_1 \sim w_3$ and
  \begin{itemize}
	\item $p = \delta(p_\init, w_1)$, 
	\item $s = \delta(p_\init, w_3)$,
	\item $w_1 \a \not\sim w_3 \gamma$.
	\end{itemize}

  \end{enumerate}

Note that the states $p$ and $r$ differ only by their index, not by their matrix
(by Lemma~\ref{lem:successor} because $u_1 \sim u_2$, and thus $\matrix(u_1) = \matrix(u_2)$),
analogously for states $r$ and $s$. Hence, for some matrix $M$
we can write $p = \tuple{M,m_1}$, $r = \tuple{M,m_2}$, and $s = \tuple{M,m_3}$.
Then it follows from Lemma~\ref{lem:successor} and the definitions of $\matrix(\cdot)$
and $\index(\cdot)$ that (denoting by $M(i,j)$ the $(i,j)$-entry of $M$):
\begin{itemize}
\item $M(m_1,m_2) = \delta(q_\init,\tbi{u_1}{u_2})$,
\item $M(m_2,m_3) = \delta(q_\init,\tbi{v_2}{v_3})$,
\item $M(m_1,m_3) = \delta(q_\init,\tbi{w_1}{w_3})$.
\end{itemize}

Now consider, in the $\sim$-equivalence class $[u_1]_{\sim}$ of $u_1$,
the $m_3$-th $\approx$-equivalence class $C$, and a word $u_3 \in C$.
Then $\matrix(u_3) = M$ and $\index(u_3) = m_3$, thus 
$s = \tuple{M,m_3} = \delta(p_\init, u_3)$. It follows that:
\begin{itemize}
\item $M(m_2,m_3) = \delta(q_\init,\tbi{u_2}{u_3})$,
\item $M(m_1,m_3) = \delta(q_\init,\tbi{u_1}{u_3})$,
\end{itemize}
and therefore $u_2 \b \sim u_3 \gamma$ and $u_1 \a \not\sim u_3 \gamma$,
which together with $u_1 \a \sim u_2 \b$ contradicts the transitivity
of $\sim$.
Hence, we can conclude that the constraint set $\Phi$ is satisfiable.

For the second point, fix a satisfying assignment for the constraints in $\Phi$.
Take the set of values assigned to the variables as the (finite) output alphabet $\Obs$,
and define the output function by $\lambda(p,\a) = x_{p,\a}$.

We show by induction on the length of histories
that the indistinguishability relation induced by the \dfal 
is the same as the one defined by~$\calR$. The base case is trivial. 
For the induction step, let us consider an arbitrary pair $\tau, \tau'$ of histories of the same length,
under the induction hypothesis, $\hat{\lambda}( \tau)  = \hat{\lambda}( \tau' )$ if, and only if, $\tau \sim \tau'$
(according to the automaton~$\calR$). For any pair $\a, \b \in \Gamma$ of letters, if $\tau \not\sim \tau'$,
then $\tau a \not\sim \tau' b$ and $\hat{\lambda}( \tau a)  \neq \hat{\lambda}( \tau' b)$. Else, if $\tau \sim \tau'$, 
let $p = \delta(p_\init, \tau)$ and $p' = \delta(p_\init, \tau')$ be the states
reached in the semi-automaton $\calF_0$ after reading $\tau$ and $\tau'$, 
and let $q = \delta(q_\init,\tbi{\tau\phantom{'}}{\tau'})$ be the state reached in the 
automaton $\calR$ after reading the pair $(\tau,\tau')$. It follows
that the state $((p,p'),q)$ is reachable in the synchronised product $(\calF_0 \prl \calF_0) \times R$.
Here, we distinguish two cases:
\begin{itemize}
\item if $\tau \a \sim \tau' \b$, then 
the constraint $x_{p,\a} = x_{p',\b}$ is in $\Phi$,
and therefore the observation of $\a$ in state $p$ is
the same as the observation of $\b$ in state $p'$ 
($\lambda(p,\a) = \lambda(p',\b)$).
\item if $\tau \a \not\sim \tau' \b$, then 
the constraint $x_{p,\a} \neq x_{p',\b}$ is in $\Phi$,
and therefore the observation of $\a$ in state $p$ is
different from the observation of $\b$ in state $p'$ 
($\lambda(p,\a) \neq \lambda(p',\b)$).
\end{itemize}
In either case, we thus have $\hat{\lambda}(\tau a) = \hat{\lambda}(\tau' b)$ if, and only if, $\tau a \sim \tau' b$,
which concludes the proof.
\end{proof}

\end{longversion}

\begin{longversion}
 
  Lemma~\ref{lem:constraints} establishes the correctness of the constructed Mealy automaton~$\calF$.
  Since the size of $\calF$ is exponential in the size $k$
  of the largest ambiguous clique, 
  and~$k$ is at most doubly-exponential (Lemma~\ref{lem:clique-size}),
  we get the following result.
\end{longversion}

\begin{theorem}\label{theo:characterization}
  For every indistinguishability relation given by a two-tape \dfa $\calR$
  such that the information tree  $\Gamma^*/_{L( \calR )}$ is of bounded branching, we can construct 
  a Mealy automaton of triply exponential size (with respect to the size of $\calR$)
  that defines a corresponding observation function.
\end{theorem}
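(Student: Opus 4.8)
The plan is to assemble the Mealy automaton $\calF = \tuple{P, \Gamma, \Obs, p_\init, \delta, \lambda}$ out of the two ingredients already prepared above: the semi-automaton $\calF_0$, whose states are the matrix--index pairs $\tuple{\matrix(\tau), \index(\tau)}$ reachable from $p_\init = \tuple{q_\init, 1}$ under the $\successor$ operation, and an output function $\lambda$ derived from a satisfying assignment of the constraint system~$\Phi$. Three things remain to be checked: that the state set $P$ is finite, that a valid output function exists, and that the kernel of the resulting observation function is exactly the given relation~$\mathop{\sim}$.

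First I would settle finiteness together with the size estimate. By Lemma~\ref{lem:successor}, $P$ is the closure of $\{p_\init\}$ under the $c$-successor operations, so it suffices to bound the dimension of the matrices that occur. This dimension is the number of $\approx$-classes inside a single $\sim$-class which, by Lemma~\ref{lem:non-diagonal-is-ambiguous}, coincides with the size of the largest ambiguous clique; under the bounded-branching hypothesis Lemma~\ref{lem:clique-size} bounds this number $k$ doubly exponentially in $\abs{\calR}$. Counting then gives at most $\abs{Q}^{d^2}$ matrices of each fixed dimension $d \le k$ and at most $k$ admissible indices, so $\abs{P} \le k^2 \cdot \abs{Q}^{k^2}$, which is triply exponential in $\abs{\calR}$; since each transition carries one output value, the observation alphabet $\Obs$ is no larger, yielding the claimed size.

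Next I would install the output function. Following the construction of~$\Phi$ on the synchronised product $(\calF_0 \prl \calF_0) \times \calR$, every reachable non-rejecting state contributes, for each pair of moves $a,b$, an equality or disequality constraint recording whether the continuations of two $\sim$-equivalent histories stay equivalent. Invoking Lemma~\ref{lem:constraints}, I would take $\Obs$ to be the finite set of values used by a satisfying assignment and set $\lambda(p,a) := x_{p,a}$. The second part of that lemma, proved by induction on the length of histories, gives $\hat{\lambda}(\tau) = \hat{\lambda}(\tau')$ if and only if $\tau \sim \tau'$, so $\ker \hat{\lambda} = \mathop{\sim}$ and $\calF$ represents the information partition as desired.

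The main obstacle is the consistency of~$\Phi$, that is, the first half of Lemma~\ref{lem:constraints}: one must exclude a chain of equalities $x_1 = x_2 = \dots = x_\ell$ clashing with a constraint $x_1 \neq x_\ell$. The decisive observation is that consecutive variables $x_{p,a}, x_{r,b}$ in such a chain have states $p = \tuple{M, m_1}$, $r = \tuple{M, m_2}$ with the \emph{same} matrix $M$ (their indices differ only because the witnessing histories are $\sim$-equivalent, hence interchangeable by Lemma~\ref{lem:reflexive-strong}), so every witnessing state $\delta(q_\init, \tbi{\cdot}{\cdot})$ along the chain is an entry $M(m_i, m_j)$ of one common matrix. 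Realising the whole chain by histories drawn from the appropriate $\approx$-classes of a single $\sim$-class then reduces the putative clash to a violation of transitivity of~$\mathop{\sim}$, which is impossible. Routing everything through one matrix, supported by Lemmas~\ref{lem:successor} and~\ref{lem:reflexive-strong}, is where the structural analysis pays off.
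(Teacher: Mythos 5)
Your proposal is correct and follows essentially the same route as the paper: you assemble the semi-automaton $\calF_0$ of matrix--index pairs, bound its state space by $k^2\cdot\abs{Q}^{k^2}$ via the doubly-exponential clique bound of Lemma~\ref{lem:clique-size}, and obtain the output function from a satisfying assignment of the constraint system~$\Phi$, whose consistency is reduced to transitivity of~$\mathop{\sim}$ through a single common matrix exactly as in Lemma~\ref{lem:constraints}. No gaps to report.
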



\begin{longversion}
\end{longversion}

\section{Conclusion}

The question of how to model information in infinite games
is fundamental to defining their strategy space.
As the decisions of each player are based on the available information,
strategies are functions from information sets to actions. Accordingly, the information
structure of a player in a game defines the support of her strategy space.

The assumption of synchronous perfect recall gives rise
to trees as information structures (Lemma~\ref{lem:info-tree}).
In the case of observation functions with a finite range
$\Sigma$, these trees are subtrees of the complete $\abs{\Sigma}$-branching
tree $\Sigma^*$\,---\,on which $\omega$-tree automata can work (see~\cite{Thomas95,LNCS2500} for surveys
on such techniques).
Concretely, every strategy based on observations can be represented as
a labelling of the tree $\Sigma^*$ with actions; the set of all strategies for a 
given game forms a regular (that is, automata-recognisable) set of trees.
Moreover, when considering winning conditions that are also regular,  
Rabin's Theorem~\cite{Rabin72} allows to conclude
that winning strategies also form a regular set.
Indeed, we can construct effectively a tree automaton that recognises the set of strategies
(for an individual player) that enforce a regular condition and,
if this set is non-empty, we can also synthesise a Mealy automaton
that defines one of these strategies.
In summary, the interpretation of strategies as observation-directed trees
allows us to search the set of all strategies systematically for winning
ones using tree-automatic methods.    

In contrast, when setting out with indistinguishability relations, we obtain more complicated
tree structures that do not offer a direct grip to classical tree-automata techniques.
As the example of Lemma~\ref{lem:example} shows, there are cases where the
information tree of a game is not regular, 
and so the set of all strategies is not recognisable by a tree automaton.
Accordingly, the automata-theoretic approach to strategy synthesis
via Rabin's Theorem cannot be applied to solve, for instance,
the basic problem of constructing a finite-state strategy for
one player to enforce a given regular winning condition.

On the other hand, modelling information with indistinguishability relations
allows for significantly more expressiveness than observation functions. 
This covers notably settings where a player can receive an
unbounded amount of information in one round. 
For instance, models with causal memory where one player may communicate his entire observation history
to another player in one round can be captured with regular indistinguishability relation,
but not with observation functions of any finite range.
Even when an information partition that can be represented by
finite-state observation functions, the representation
by an indistinguishability relation may be considerably
more succinct.  
For instance, a player that observes the move history perfectly,
but with a delay of~$d$ rounds can be described 
by a two-tape \dfa with $O( d )$ many states,
whereas any Mealy automaton would require exponentially more states
to define the corresponding observation function.

At the bottom line, as a finite-state model of information,
indistinguishability relations are strictly more expressive and
can be (at least exponentially) more succinct than observation functions.
In exchange, the observation-based model is directly accessible to automata-theoretic methods,
whereas the indistinguishability-based model is not.
Our result in Theorem~\ref{thm:decision} allows to
identify effectively the instances of indistinguishability relations
for which this gap can be bridged.
That is, we may take advantage of the 
expressiveness and succinctness of
indistinguishability relations to describe a game problem
and use the procedure to obtain, whenever possible,
a reformulation in terms of observation functions towards
solving the initial problem with automata-theoretic methods.

This initial study opens several exciting research directions.
One immediate question is whether the fundamental finite-state methods on strategy synthesis
for games with imperfect information 
can be extended from the observation-based model to the one based on indistinguishability
relations. Is it decidable, given a game for one player with a
regular winning condition against Nature, whether there exist a winning strategy ?
Can the set of all winning strategies be described by finite-state automata ?
In case this set is non-empty, does it contain a strategy
defined by a finite-state automaton ? 

Another, more technical, question concerns the automata-theoretic foundations of
games. The standard  models are laid out for representations of games and strategies
as trees of a fixed branching degree.
How can these automata models be extended to trees with unbounded branching
towards capturing strategies constrained by indistinguishability relations ?
Likewise, the automatic structures that arise as information quotients
of indistinguishability relations form a particular class of trees,
where both the successor and the descendant relation (that is, the transitive closure) are regular.
On the one hand, this particularity may allow to decide properties about games (viz. their information trees)
that are undecidable when considering general automatic trees,
notably regarding bisimulation or other forms of game equivalence.

Finally, in a more application-oriented perspective, it will be worthwhile to explore
indistinguishability relations as a model for games where players
can communicate via messages of arbitrary length. In particular this will allow to
extend the framework of infinite games on finite graphs to systems
with causal memory considered in the area of distributed computing.

\bibliographystyle{plainurl} 
\bibliography{biblio}

\begin{thebibliography}{10}

\bibitem{Bacharach85}
Michael Bacharach.
\newblock Some extensions of a claim of {Aumann} in an axiomatic model of
  knowledge.
\newblock {\em Journal of Economic Theory}, 37(1):167--190, 1985.
\newblock \href {http://dx.doi.org/10.1016/0022-0531(85)90035-3}
  {\path{doi:10.1016/0022-0531(85)90035-3}}.

\bibitem{BerwangerKP11}
Dietmar Berwanger, Lukasz Kaiser, and Bernd Puchala.
\newblock A perfect-information construction for coordination in games.
\newblock In {\em {Foundations} of {Software} {Technology} and {Theoretical}
  {Computer} {Science} ({FSTTCS} 2011)}, volume~13 of {\em {LIPIcs}}, pages
  387--398. Leibniz-Zentrum fuer Informatik, 2011.
\newblock \href {http://dx.doi.org/10.4230/LIPIcs.FSTTCS.2011.387}
  {\path{doi:10.4230/LIPIcs.FSTTCS.2011.387}}.

\bibitem{BlumensathGra00}
Achim Blumensath and Erich Grädel.
\newblock Automatic structures.
\newblock In {\em {Logic} in {Computer} {Science} (LICS 2000)}, pages 51--62.
  IEEE Comput. Soc, 2000.
\newblock \href {http://dx.doi.org/10.1109/LICS.2000.855755}
  {\path{doi:10.1109/LICS.2000.855755}}.

\bibitem{BozzelliMauPin2015}
Laura Bozzelli, Bastien Maubert, and Sophie Pinchinat.
\newblock Uniform strategies, rational relations and jumping automata.
\newblock {\em Information and Computation}, 242:80--107, June 2015.
\newblock \href {http://dx.doi.org/10.1016/j.ic.2015.03.012}
  {\path{doi:10.1016/j.ic.2015.03.012}}.

\bibitem{BuechiLan69}
Julius~R. B{\"u}chi and Lawrence~H. Landweber.
\newblock Solving sequential conditions by finite-state strategies.
\newblock {\em Transactions of the American Mathematical Society},
  138:295--311, 1969.
\newblock \href {http://dx.doi.org/10.2307/1994916}
  {\path{doi:10.2307/1994916}}.

\bibitem{ChatterjeeDHR07}
Krishnendu Chatterjee, Laurent Doyen, Thomas~A. Henzinger, and Jean-Francois
  Raskin.
\newblock Algorithms for omega-regular games with imperfect information.
\newblock {\em Logical Methods in Computer Science}, Volume 3, Issue 3, 2007.
\newblock \href {http://dx.doi.org/DOI:10.2168/LMCS-3(3:4)2007}
  {\path{doi:DOI:10.2168/LMCS-3(3:4)2007}}.

\bibitem{DimaMauPin2018}
Catalin Dima, Bastien Maubert, and Sophie Pinchinat.
\newblock Relating {Paths} in {Transition} {Systems}: {The} {Fall} of the
  {Modal} {Mu}-{Calculus}.
\newblock {\em ACM Transactions on Computational Logic (TOCL)},
  19(3):23:1--23:33, September 2018.
\newblock \href {http://dx.doi.org/10.1145/3231596}
  {\path{doi:10.1145/3231596}}.

\bibitem{FaginHMV03}
Ronald Fagin, Joseph~Y. Halpern, Yoram Moses, and Moshe~Y. Vardi.
\newblock {\em Reasoning about knowledge}.
\newblock MIT Press, Cambridge, Mass., 2003.

\bibitem{FournierLho2019}
Paulin Fournier and Nathan Lhote.
\newblock Equivalence kernels of sequential functions and sequential
  observation synthesis.
\newblock {\em CoRR}, abs/1910.06019, 2019.
\newblock URL: \url{http://arxiv.org/abs/1910.06019}.

\bibitem{Geanakoplos92}
John Geanakoplos.
\newblock Common {Knowledge}.
\newblock {\em Journal of Economic Perspectives}, 6(4):53--82, 1992.
\newblock \href {http://dx.doi.org/10.1257/jep.6.4.53}
  {\path{doi:10.1257/jep.6.4.53}}.

\bibitem{GossnerTom09}
Olivier Gossner and Tristan Tomala.
\newblock Repeated games with complete information.
\newblock In Robert~A. Meyers, editor, {\em Encyclopedia of {Complexity} and
  {Systems} {Science}}, pages 7616--7630. Springer New York, New York, NY,
  2009.
\newblock \href {http://dx.doi.org/10.1007/978-0-387-30440-3_451}
  {\path{doi:10.1007/978-0-387-30440-3_451}}.

\bibitem{LNCS2500}
Erich Grädel, Wolfgang Thomas, and Thomas Wilke, editors.
\newblock {\em Automata, logics, and infinite games}.
\newblock Number 2500 in Lecture notes in computer science. Springer, 2002.

\bibitem{KhoussainovNer95}
Bakhadyr Khoussainov and Anil Nerode.
\newblock Automatic presentations of structures.
\newblock In Gerhard Goos, Juris Hartmanis, Jan Leeuwen, and Daniel Leivant,
  editors, {\em Logic and {Computational} {Complexity}}, volume 960, pages
  367--392. Springer Berlin Heidelberg, Berlin, Heidelberg, 1995.
\newblock \href {http://dx.doi.org/10.1007/3-540-60178-3_93}
  {\path{doi:10.1007/3-540-60178-3_93}}.

\bibitem{Kuhn53}
Harold~W. Kuhn.
\newblock Extensive games and the problem of information, {Contributions} to
  the theory of games {II}.
\newblock {\em Annals of Mathematics Studies}, 28:193--216, 1953.

\bibitem{KupfermanVar2000}
Orna Kupferman and Moshe~Y. Vardi.
\newblock Synthesis with {Incomplete} {Informatio}.
\newblock In Howard Barringer, Michael Fisher, Dov Gabbay, and Graham Gough,
  editors, {\em Advances in {Temporal} {Logic}}, Applied {Logic} {Series},
  pages 109--127. Springer Netherlands, Dordrecht, 2000.
\newblock \href {http://dx.doi.org/10.1007/978-94-015-9586-5_6}
  {\path{doi:10.1007/978-94-015-9586-5_6}}.

\bibitem{KuskeLoh11}
Dietrich Kuske and Markus Lohrey.
\newblock Automatic structures of bounded degree revisited.
\newblock {\em The Journal of Symbolic Logic}, 76(04):1352--1380, 2011.
\newblock \href {http://dx.doi.org/10.2178/jsl/1318338854}
  {\path{doi:10.2178/jsl/1318338854}}.

\bibitem{LinWon1988}
F.~Lin and W.~M. Wonham.
\newblock On observability of discrete-event systems.
\newblock {\em Information Sciences}, 44(3):173--198, April 1988.
\newblock \href {http://dx.doi.org/10.1016/0020-0255(88)90001-1}
  {\path{doi:10.1016/0020-0255(88)90001-1}}.

\bibitem{Malcev1973}
Anatoly~I. Mal’cev.
\newblock {\em Algebraic {Systems}}.
\newblock Springer Berlin Heidelberg, Berlin, Heidelberg, 1973.
\newblock \href {http://dx.doi.org/10.1007/978-3-642-65374-2}
  {\path{doi:10.1007/978-3-642-65374-2}}.

\bibitem{Maubert14}
Bastien Maubert.
\newblock {\em Logical foundations of games with imperfect information :
  uniform strategies. (Fondations logiques des jeux {\`{a}} information
  imparfaite : strat{\'{e}}gies uniformes)}.
\newblock PhD thesis, University of Rennes 1, France, 2014.
\newblock URL: \url{https://tel.archives-ouvertes.fr/tel-00980490}.

\bibitem{MaubertPin2014}
Bastien Maubert and Sophie Pinchinat.
\newblock A {General} {Notion} of {Uniform} {Strategies}.
\newblock {\em International Game Theory Review}, 16(01):1440004, March 2014.
\newblock \href {http://dx.doi.org/10.1142/S0219198914400040}
  {\path{doi:10.1142/S0219198914400040}}.

\bibitem{Mikolajczak91}
Boleslaw Mikolajczak.
\newblock {\em {Algebraic and structural automata theory}}.
\newblock Annals of Discrete Mathematics. North-Holland, Amsterdam, 1991.

\bibitem{Rabin72}
Michael~Oser Rabin.
\newblock {\em Automata on {Infinite} {Objects} and {Church}'s {Problem}}.
\newblock American Mathematical Society, Boston, MA, USA, 1972.

\bibitem{Ramsey30}
Frank~P. Ramsey.
\newblock On a problem in formal logic.
\newblock {\em Proc. London Math. Soc.}, 30:264--286, 1930.

\bibitem{Reif84}
John~H. Reif.
\newblock The complexity of two-player games of incomplete information.
\newblock {\em Journal of Computer and System Sciences}, 29(2):274--301, 1984.
\newblock \href {http://dx.doi.org/10.1016/0022-0000(84)90034-5}
  {\path{doi:10.1016/0022-0000(84)90034-5}}.

\bibitem{Thomas95}
Wolfgang Thomas.
\newblock On the synthesis of strategies in infinite games.
\newblock In {\em Symposium on Theoretical Aspects of Computer Science (STACS
  1995)}, volume 900, pages 1--13. Springer, 1995.
\newblock \href {http://dx.doi.org/10.1007/3-540-59042-0_57}
  {\path{doi:10.1007/3-540-59042-0_57}}.

\bibitem{vanderMeydenWil2005}
Ron van~der Meyden and Thomas Wilke.
\newblock Synthesis of {Distributed} {Systems} from {Knowledge}-{Based}
  {Specifications}.
\newblock In Martín Abadi and Luca de~Alfaro, editors, {\em {CONCUR} 2005 –
  {Concurrency} {Theory}}, volume 3653, pages 562--576, Berlin, Heidelberg,
  2005. Springer Berlin Heidelberg.
\newblock \href {http://dx.doi.org/10.1007/11539452_42}
  {\path{doi:10.1007/11539452_42}}.

\bibitem{vNM44}
John von Neumann and Oskar Morgenstern.
\newblock {\em Theory of games and economic behavior}.
\newblock Princeton University Press, 1944.

\bibitem{Weber89}
Andreas Weber.
\newblock On the valuedness of finite transducers.
\newblock {\em Acta Inf.}, 27(8):749--780, 1990.
\newblock \href {http://dx.doi.org/10.1007/BF00264285}
  {\path{doi:10.1007/BF00264285}}.

\end{thebibliography}

\end{document}